%
%
%
%
%
%
%
\documentclass[%
 reprint,
superscriptaddress,
 amsmath,amssymb,
 aps,
]{revtex4-2}
\usepackage{graphicx}
\usepackage{indentfirst}
\usepackage{braket}
\usepackage{float}
\usepackage{amsmath}
\usepackage{amssymb}
\usepackage{amsfonts}
\usepackage{tikz-cd}
\tikzcdset{
  row sep=1.1em,
  column sep=1.1em,
  cramped,
  arrows=semithick
}
\usepackage{graphicx}     
\usepackage{subcaption}
\usepackage{hyperref}
\usepackage{natbib}
\usepackage{float}
\usepackage{amsthm}
\usepackage{epstopdf}
\newtheorem{theorem}{Theorem}
\newtheorem{corollary}{Corollary}[theorem]
\usepackage{ragged2e}
\usepackage{CJK}
\usepackage{esint}
\usepackage{color}
\usepackage[T1]{fontenc}
\usepackage{amsfonts}
\usepackage{natbib}
\usepackage{footmisc}
\usepackage{verbatim}
\usepackage{multirow}
\usepackage[english]{babel}
\usepackage{url}
\usepackage{bm}
\usepackage{hyperref}
\definecolor{darkblue}{rgb}{0,0,0.5}
\hypersetup{
colorlinks=true,
linkcolor=black,
filecolor=blue,
citecolor=darkblue,  
urlcolor=black,
}
\newtheorem{definition}{Definition}

\urlstyle{same}
\usepackage{tikz}
\usetikzlibrary{arrows.meta,positioning,calc,fit,decorations.pathmorphing,shapes.geometric}

\usepackage{enumitem,kantlipsum}
\usepackage{refcount}


   
   
\newcounter{propositioncounter}
\newenvironment{proposition}[1][]{\refstepcounter{propositioncounter}
   {{\em Proposition~\thepropositioncounter #1}.---} \rmfamily}
   


\def\be{\begin{equation}}
\def\ee{\end{equation}}
\def\ba{\begin{eqnarray}}
\def\ea{\end{eqnarray}}
\interdisplaylinepenalty=0

\newcommand{\tr}{{\rm Tr}}

\newcommand{\eq}[1]{(\hyperref[eq:#1]{\ref*{eq:#1}})}

\renewcommand{\sec}[1]{\hyperref[sec:#1]{Section~\ref*{sec:#1}}}
\newcommand{\thrm}[1]{\hyperref[thm:#1]{Theorem~\ref*{thm:#1}}}
\newcommand{\lemm}[1]{\hyperref[lemm:#1]{Lemma~\ref*{lemm:#1}}}
\newcommand{\pro}[1]{\hyperref[pro:#1]{Proposition~\ref*{pro:#1}}}
\newcommand{\corr}[1]{\hyperref[corr:#1]{Corollary~\ref*{corr:#1}}}
\newcommand{\deff}[1]{\hyperref[deff:#1]{Definition~\ref*{deff:#1}}}
\newcommand{\fig}[1]{\hyperref[fig:#1]{\ref*{fig:#1}}}
\newcommand{\tbl}[1]{\hyperref[fig:#1]{\ref*{tbl:#1}}}

\usepackage{graphicx}
\usepackage{dcolumn}
\usepackage{bm}


\begin{document}

\preprint{APS/123-QED}

\title{Even‐Odd Splitting of the Gaussian Quantum Fisher Information: From Symplectic Geometry to Metrology}

\author{Kaustav Chatterjee}

 \email{kauch@dtu.dk}
\affiliation{
Center for Macroscopic Quantum States (bigQ), Department of Physics, Technical University of Denmark, \\
 Building 307, Fysikvej, 2800 Kongens Lyngby, Denmark
}%
\author{Tanmoy Pandit}
\affiliation{QMill Oy Keilaranta 12 D, 0215 0 Espoo, Finland}
\affiliation{Institute for Theoretical Physics, Leibniz Institute of Hannover, Hannover, Germany}
\affiliation{Institute for Physics and Astronomy, TU Berlin, Germany}
\author{Varinder Singh}
\affiliation{School of Physics, Korea Institute for Advanced Study, Seoul 02455, Korea}
\author{Pritam Chattopadhyay}

\affiliation{Department of Chemical and Biological Physics,
Weizmann Institute of Science, Rehovot 7610001, Israel}
\author{Ulrik Lund Andersen}%
\email{ulrik.andersen@fysik.dtu.dk}
\affiliation{
Center for Macroscopic Quantum States (bigQ), Department of Physics, Technical University of Denmark, \\
 Building 307, Fysikvej, 2800 Kongens Lyngby, Denmark
}

\date{\today}

\begin{abstract}
We introduce a canonical decomposition of the quantum Fisher information (QFI) for centered multimode Gaussian states into two additive pieces: an \textit{even part} that captures changes in the symplectic spectrum and an \textit{odd part} associated with correlation-generating dynamics. On the pure-state manifold, the even contribution vanishes identically, while the odd contribution coincides with the QFI derived from the natural metric on the Siegel upper half-space, revealing a direct geometric underpinning of pure-Gaussian metrology. This also provides a link between the graphical representation of pure Gaussian states and an explicit expression for the QFI in terms of graphical parameters. For evolutions completely generated by passive Gaussian unitaries (orthogonal symplectics), the odd QFI vanishes, while thermometric parameters contribute purely to the even sector with a simple spectral form; we also derive a state-dependent lower bound on the even QFI in terms of the purity-change rate. We extend the construction to the full QFI matrix, obtaining an additive even–odd sector decomposition that clarifies when cross-parameter information vanishes. Applications to unitary sensing (beam splitter versus two-mode squeezing) and to Gaussian channels (loss and phase-insensitive amplification), including joint phase–loss estimation, demonstrate how the decomposition cleanly separates resources associated with spectrum versus correlations. The framework supplies practical design rules for continuous-variable sensors and provides a geometric lens for benchmarking probes and channels in Gaussian quantum metrology.
\end{abstract}

 \maketitle


\section{\label{sec:level1}Introduction\protect}

Precise measurement and parameter estimation underpin some of the most fundamental tasks in physics, from probing weak forces to stabilizing clocks and characterizing quantum devices. The framework of quantum metrology provides statistical limits on such tasks by combining quantum mechanics with estimation theory, and has grown into a central subfield of quantum information science \cite{Giovannetti_2011,Helstrom1969QuantumDA,Holevo1982ProbabilisticAS,paris2009quantumestimationquantumtechnology,Gio1,Liu,Toth2014QMetrology,Pezze2018AtomicEnsembles,Deng2024QuantumEnhancedFock}. Key to the analysis of sensing or estimation protocols using quantum mechanics is the \textit{quantum Fisher information} (QFI)~\cite{Liu,meyer2021fisher,rath2021quantum,frowis2016detecting,vitale2024robust}, which quantifies the sensitivity of a quantum state to changes in an encoded parameter and determines, via the quantum Cramér–Rao bound~\cite{paris2009quantumestimationquantumtechnology,Helstrom1969QuantumDA}, the best achievable scaling of estimation error for any measurement and estimator.

Continuous-variable (CV) systems and, in particular, Gaussian states, form a natural arena for testing and realizing ideas of quantum metrology~\cite{Oh2020OpticalEstimation,PhysRevA.88.040102,Rahman2025GenuineNonGaussianity,Santos2025EnhancedFrequency,santos2025improving,hall2025probing}. Bosonic modes of light, microwaves, phonons, and collective excitations in solid-state platforms are routinely prepared and controlled in regimes where their states are well approximated by Gaussians \cite{Serafini2017QuantumCV,weed,Adesso_2014}. Many flagship applications of quantum sensing can be modeled in this language: interferometric phase estimation and gravitational-wave detection \cite{inf1,inf2,PhysRevX.13.041021}, continuous-variable quantum key distribution and device certification \cite{zhang2024continuous,usenko2025continuous}, thermometry and noise spectroscopy \cite{cenni,mauro,cenni2022thermometry,Chattopadhyay_2025,gsh7r7ms,seah2019collisional,mirkhalaf2024operational,Paz_Silva_2017}, and even sensing tasks in biology and chemistry \cite{deAndrade:20,casacio2021quantum,Aslam2023QuantumSensorsBio}. In all these settings, Gaussian states and Gaussian transformations provide a flexible testbed that is analytically tractable yet rich enough to capture nontrivial resources such as squeezing, multimode correlations, and thermal noise.

For Gaussian states, several powerful expressions for the QFI and for the closely related Bures metric are known \cite{Banchi_2015,monras2013phasespaceformalismquantum, quantumparameterestimationusing,Jiang2014ExponentialForm,Safranek2019EstimationGaussian,Marian2016QFIMTwoMode}. These formulas exploit the compact description of Gaussian states in terms of first and second moments and provide closed forms in terms of the covariance matrix and displacement vector. They are widely used in the literature and have enabled numerous case studies in Gaussian quantum metrology. However, in their standard form, these expressions tend to mix the contribution of physically distinct resources: changes in populations (or symplectic eigenvalues) versus changes in correlations between modes, and active (squeezing-type) versus passive (linear-optical) transformations. From the perspective of designing or benchmarking metrological protocols, it is precisely this distinction that is most informative: how much of the achievable sensitivity originates from \textit{thermal/spectral} aspects of the probe, and how much arises from genuinely correlation-generating dynamics such as squeezing or entanglement. In their standard form, however, existing Gaussian QFI expressions interwine these contributions, obscuring the physical origin of metrological advantage and limiting their usefulness as design principles.

At the same time, the manifold of Gaussian states carries a rich geometric structure. Centered Gaussian states are fully characterized by their covariance matrix, which is a real, symmetric, positive matrix constrained by the uncertainty principle~\cite{Adesso_2014}. The natural symmetry group acting on these covariance matrices is the real symplectic group $Sp(2n,R)$, corresponding to Gaussian unitary transformations on $n$ modes \cite{Arvind_1995}. For pure Gaussian states, this geometry simplifies further: they can be represented by a complex adjacency (or \textit{graph}) matrix $Z$ in the Siegel upper half-space~\cite{Ohsawa2015Siegel}, and Gaussian unitaries act by \textit{Möbius transformations} on $Z$ \cite{Menicucci_2011}. This graphical calculus has proved extremely useful in the theory of Gaussian cluster states and measurement-based quantum computation \cite{weed}. More generally, mixed Gaussian states can be described by their symplectic eigenvalues and the symplectic transformation that diagonalizes the covariance matrix. This naturally leads to an orbit structure and, as we formalize in this work, to a fiber-bundle~\cite{7a97cef4-8443-3a57-aced-2037f84b9e06} viewpoint in which the \textit{spectral data} and the \textit{symplectic frame} play distinct roles. Despite this rich geometric background, the connection between such structures and the QFI for Gaussian states has not been fully exploited.

In this work, we bridge these perspectives by showing that the QFI (equivalently, the Bures metric) for centered multimode Gaussian states admits a canonical orthogonal decomposition into two additive pieces, which we call the even and odd contributions. The split is defined at the level of tangent vectors to the Gaussian state manifold, using the Cartan decomposition of the symplectic Lie algebra and the parity of matrices with respect to the symplectic form~\cite{hall2000elementaryintroductiongroupsrepresentations}. This yields a decomposition of any infinitesimal variation of the covariance matrix into an \textit{even component}, which is naturally associated with changes in the symplectic spectrum, and an \textit{odd component}, which is associated with correlation-generating dynamics. We prove that the Bures metric is block-diagonal with respect to this splitting, so that the total QFI is a sum of two nonnegative, geometrically defined terms, with no cross term.

On the pure-state manifold, all symplectic eigenvalues are fixed to $1/2$, so variations in the covariance matrix are purely odd in this sense. We show that in this case the even contribution vanishes identically and the odd contribution coincides with the QFI derived from the natural Riemannian metric on the Siegel upper half-space~\cite{Braunstein1994StatisticalDistance,Petz1996Geometries,Bengtsson2017Geometry}. Equivalently, we obtain an explicit graph-based expression for the Bures metric of pure Gaussian states in terms of the complex adjacency matrix $Z$, thereby providing a direct bridge between graphical calculus and Gaussian metrology. This establishes a clear geometric underpinning of pure-Gaussian QFI and shows that, for pure states, the entire metrological content is captured by the geometry of the Siegel domain.

For general (mixed) centered Gaussian states, we introduce a bundle structure over the space of covariance matrices that separates spectral data and symplectic frames. Within this framework, we characterize the even–odd splitting intrinsically, relate it to the Cartan decomposition of the symplectic algebra~\cite{Helgason2001DifferentialGeometry,Knapp2002LieGroupsBeyond,Salamon2024Notes,ReichsteinCartanIwasawa}, and prove that the resulting even and odd contributions to the Bures metric are invariant under a large class of symplectic frame changes. We then derive simple closed-form formulas for both even and odd contributions in a \textit{Williamson eigenbasis}, expressed directly in terms of the symplectic eigenvalues and the blocks of the infinitesimal generator.

These geometric constructions have several concrete operational consequences. First, we show that for any parameter that only changes the symplectic eigenvalues (what we call a thermometric parameter), the odd sector automatically vanishes and the full QFI is exhausted by the even part. In this case, even QFI takes a particularly simple spectral form, depending only on derivatives of the symplectic eigenvalues. Second, we establish a state-dependent lower bound on even QFI in terms of the rate of change of the state’s purity, showing that such a contribution to QFI controls how fast the state is driven away from or towards purity. In particular, this bound diverges when approaching the pure-state manifold along directions that change the symplectic eigenvalues, which matches the behaviour observed in transmissivity estimation~\cite{Safranek:2016blj}. Taken together, these results motivate interpreting the even sector as quantifying \textit{thermodynamic or noise-related} aspects of metrology and as a witness of purity-breaking along the estimation path. Third, we prove that passive Gaussian unitaries—those represented by orthogonal symplectic matrices, corresponding to \textit{linear interferometers} without squeezing—do not generate any odd contribution: their associated odd QFI vanishes. Conversely, genuinely active Gaussian operations, such as two-mode squeezing, can contribute nontrivially to the odd QFI by creating or redistributing correlations at fixed symplectic spectrum. This gives the odd sector a clear operational meaning as a quantifier of correlation-generating resources and provides a clean way to distinguish sensing protocols that rely purely on population changes from those that exploit active Gaussian dynamics.

We also extend our construction to the full QFI matrix for multi-parameter estimation, obtaining an additive decomposition into even and odd matrix contributions. In this setting, we show a sector decoupling: parameters whose generators lie purely in the even sector do not mix, at the level of QFI, with parameters whose generators lie purely in the odd sector. This clarifies when \textit{cross-parameter information} (and thus potential incompatibilities) can or cannot arise between thermometric parameters and correlation-generating parameters. We illustrate this in joint estimation scenarios, such as simultaneous sensing of phase and loss, where our framework cleanly separates the roles of population changes versus correlations in the attainable precision and in the structure of the QFI matrix.

Finally, we demonstrate the usefulness of the even–odd splitting on several representative examples. We first analyze unitary sensing tasks involving beam splitters and two-mode squeezers, showing how the graphical calculus gives immediate access to QFI on the pure-state manifold and how the odd contribution discriminates between passive and active encodings. We then study Gaussian channels such as loss and phase-insensitive amplification, both individually and in joint phase–loss estimation schemes. In these examples, the even sector captures changes in the noise level and purity, while the odd sector captures the build-up of correlations and interference effects. 

The rest of the paper is organized as follows. In Sec. II we review the basics of multimode Gaussian states, the symplectic group and the graphical representation of pure Gaussian states, and we introduce the fiber-bundle structure over the Gaussian state space. In Sec. III we define the even–odd splitting of tangent vectors, derive the corresponding decomposition of the Bures metric and QFI, and obtain closed-form expressions in the Williamson frame, together with invariance properties. In Sec. IV we apply our framework to concrete examples of unitary sensing and Gaussian channels, including thermometry and joint phase–loss estimation, highlighting the operational meaning of the even and odd sectors. We conclude in Sec. V with a discussion of how this geometric perspective can be used to guide the design and benchmarking of continuous-variable sensing protocols and outline possible extensions beyond Gaussian states.

\section{\label{sec:level2}Preliminaries\protect}
An $N$-mode bosonic continuous-variable system is described by annihilation operators $\left\{\hat{a}_k, 1\le k \le N\right\}$, which satisfy the commutation relation $\left[\hat{a}_k,\hat{a}_j^\dagger\right]=\delta_{kj}, \left[\hat{a}_k,\hat{a}_j\right]=0$. Equivalently, one can define 2N real quadrature field operators $\hat{q}_k=\frac{1}{\sqrt{2}}(\hat{a}_k+\hat{a}_k^\dagger), \hat{p}_k=\frac{i}{\sqrt{2}}\left(\hat{a}_k^\dagger-\hat{a}_k\right)$ and collect them into the real vector
$\hat{x}=\left(\hat{q}_1,\hat{p}_1,\cdots, \hat{q}_N,\hat{p}_N\right)^T$. This vector satisfies the canonical commutation relation
$
\left[\hat{x}_i,\hat{x}_j\right]=i {  \Omega}_{ij}.
$
where ${  \Omega}=i \bigoplus_{k=1}^N { \sigma_y} $ and  ${ \sigma_y}$ is the Pauli matrix. We will also sometimes change the ordering to $(\hat q_1,\hat q_2\ldots,\hat p_1,\hat p_2,\ldots)$ which leads to permutation of the $\Omega$ as $\begin{pmatrix}
    0& I\\
    -I & 0\\
\end{pmatrix}$. A quantum state $\rho$ can be conveniently described by its (symmetrically ordered) characteristic function
\be 
\chi\left({  \xi};\rho\right)=\tr \left[\rho \hat{D}\left({  \xi}\right)\right],
\label{Wigner_characterisitic_function}
\ee
where 
$
\hat{D}\left({  \xi}\right)=\exp\left(i{  \xi}^T{  \Omega} \hat{x}  \right)$
is the multi-mode Weyl displacement operator and $  \xi=\left(\xi_1,\cdots \xi_{2N}\right)^T\in \mathbb{R}^{2N}$ is a phase-space vector. A state $\rho$ is Gaussian if and only if its characteristic function has the Gaussian form~\cite{weed,Adesso_2014,Holevo1982ProbabilisticAS,brask2021gaussian}
\be
\chi \left({  \xi}_E;\rho\right)=\exp\left(-\frac{1}{4}{  \xi}^T \left({  \Omega^T} {  \Lambda } {  \Omega}\right){  \xi}+i {  \xi}^T{  \Omega^T} \overline{  x}\right).
\ee
Here $\overline{  x}=\braket{  \hat{x}}_{\rho}$ is the state's mean and 
$
{  \Lambda}_{ij}=\braket{\{\hat{x}_i-\overline{x}_i,\hat{x}_j-\overline{x}_j\}}_\rho$
is its covariance matrix, with $\{,\}$ denoting the anticommutator. 
Thus, every Gaussian state is completely characterized by $\overline{  x}$ and $\Lambda$. For the rest of the paper, we will only consider centered Gaussians (meaning the first moment is 0), leaving the idea of defining appropriate splitting for \textit{uncentered Gaussians open}. Hence, our set of Gaussian states is in one-to-one correspondence with the set of covariance matrices, which we define as a set $\mathcal{G}_n$
\begin{equation}
    \mathcal{G}_n:= \{V\in \mathbb{R}^{2n\times 2n}| V=V^T ; V+i\Omega/2\geq 0\},
\end{equation}
where we work in ordering $(q_1,..,q_n,p_1,...,p_n)$ and the symplectic form is given by $\Omega:=\begin{pmatrix} 0 & I_n\\
-I_n&0\end{pmatrix}$. By Williamson's decomposition, any covariance matrix can be written as $V=SDS^T$ where $D$ contains the symplectic eigenvalues and $S$ belongs to $Sp$, the symplectic group $Sp(2n,R)$. Explicitly, $D=diag(K,K)$ with $K=diag(k_1,k_2,\ldots, k_n)$. But the multiplicity of these symplectic values would play a role later, and hence, we define $\mathcal G^\mu_n$ as those $V=SDS^T$ for which $D$ has a multiplicity pattern of $\mu$ denoted by $d(D)=\mu$. For example, for a $3$ mode state with multiplicity pattern $\mu=(1,2)$ means $K=(k_1,k_2,k_2)$ and $k_1\leq k_2$. This gives $\mathcal{G}_n=\sqcup_{\mu}\mathcal{G}^\mu_n$. We think of this as a stratified manifold. We can also define the set $\mathcal C_n^\mu:=\{D|d(D)=\mu\}$. This set is the collection of all Williamson diagonalized covariance matrices which have a degeneracy pattern of $\mu$. Given such a $D$, we have the symplectic stabilizer of it given by $Stab(D):=\{S\in Sp|SDS^T=D\}$. This stabilizer set is equivalent to the group manifold $\mathcal F_n^\mu:=\Pi_{\mu_i}U(\mu_i)$, which intuitively is a bunch of phase shifters that work within the degenerate subspaces. The notation $U(d)$ implies unitary matrices of dimension $d^2$. The equivalence as stated comes from the fact of the equivalence of orthogonal symplectic matrices and unitary matrices as stated in \cite{Arvind_1995}. We can also define the orbit of $D$ as $\mathcal O_D:=\{SDS^T|S\in Sp\}=Sp/\mathcal{F}^\mu_n$ where the last equality is understood as a collection of equivalence classes where you upto the phase shifters that can act on the degenerate blocks. 
\begin{table*}[t]
  \centering
  \caption{Summary of spaces and notation used in the geometric description of Gaussian states.}
  \label{tab:spaces}
  \begin{ruledtabular}
  \begin{tabular}{lll}
    Symbol & Type & Description \\ \hline
    $\mathcal{G}_n$ &
      Subset of $\mathrm{Sym}(2n,\mathbb R)$ &
      Covariance matrices of centered $n$-mode Gaussian states. \\[0.25em]

    $\mathcal{G}_n^\mu$ &
      Stratum of $\mathcal G_n$ &
      Covariances with fixed symplectic-eigenvalue multiplicity pattern $\mu$. \\[0.25em]

    $\mathcal C_n^\mu$ &
      Subset of $\mathcal G_n^\mu$ &
      Williamson-diagonal covariances $D = \mathrm{diag}(K,K)$ with
      multiplicity pattern $\mu$. \\[0.25em]

    $\mathfrak h_n$ &
      Siegel upper half-space &
      Complex symmetric matrices $Z = X + iY$ with $X=X^T$, $Y=Y^T>0$. \\[0.25em]

    $Sp$ &
      Lie group &
      Real symplectic group : $S^T \Omega S = \Omega$ (Gaussian unitaries). \\[0.25em]

    $O_{sp}$ &
      Lie subgroup of $Sp$ &
      Orthogonal symplectic group; passive Gaussian unitaries; $O_{sp} \cong U(n)$. \\[0.25em]

    $U(n)$ &
      Lie group &
      Standard unitary group; identified with $O_{sp}$ via a fixed isomorphism. \\[0.25em]

    $\mathcal F_n^\mu$ &
      Lie subgroup of $U(n)$ &
      Stabilizer of $D\in C_n^\mu$:
      $\mathrm{Stab}(D) \cong \mathcal F_n^\mu := \prod_i U(\mu_i)$. \\[0.25em]

    $\mathcal O_D$ &
      Homogeneous space &
      Symplectic orbit of $D$:
      $O_D = \{ S D S^T \mid S\in\mathrm{Sp}\}
       \cong Sp/\mathcal F_n^\mu$. \\[0.25em]

    $Z$ &
      Element of $\mathfrak h_n$ &
      Complex adjacency (graph) matrix of a pure Gaussian state. \\[0.25em]

    $\Gamma$ &
      Spectral/degeneracy data &
      Element of $\mathcal C_n^\mu \times U(n)/\mathcal F_n^\mu$ collecting
      symplectic eigenvalues and residual passive mixing. \\[0.25em]

    $(Z,\Gamma)$ &
      Quantum state label &
      Mixed-state parametrization,
      pure states correspond to $(Z,I/2)$. \\
  \end{tabular}
  \end{ruledtabular}
\end{table*}

\subsection{Fiber bundle structure over Gaussian state space}
As discussed in the introduction, we want to build up to a formalism of Gaussian states which includes graphical information (explicitly defined later) as well as remaining information (not captured by graphs). To rigorously capture these pieces of information into a single structure, we resort to a fiber-bundle approach, which explicitly shows how much more freedom is present in the Gaussian mixed states compared to pure states. We briefly explain to the readers what a fiber bundle is and point them towards references like~\cite{Lang1999,Steenrod1951FibreBundles,Husemoller1994FibreBundles,KobayashiNomizu1963FoundationsI,Nakahara2003GeometryTopologyPhysics,NashSen1983Topology,Frankel2011GeometryOfPhysics,Bleecker2005GaugeTheory}. 

\begin{definition}[Fiber bundle]
Let $E$ and $B$ be smooth manifolds and let $F$ be another manifold (the \emph{typical fiber}).
A \emph{(smooth) fiber bundle} with total space $E$, base space $B$ and fiber $F$ is a
surjective smooth map
\begin{equation}
  \pi : E \to B,
\end{equation}
such that locally around any point of $B$ the total space looks like a product $U\times F$,
and the map $\pi$ is locally just projection onto $U$.
\end{definition}
A fiber bundle is said to be \emph{globally trivial} (or simply a \emph{trivial bundle}) if there exists a global bijective (smooth) map
\begin{equation}
  \Phi : E \xrightarrow{\ \cong\ } B \times F.
\end{equation}
In such cases, we often identify $E$ with $B\times F$ and write $E \cong B\times F$.

\begin{definition}[Section]
Given a fiber bundle $\pi : E \to B$, a \emph{(smooth) section} is a smooth map
\begin{equation}
  \sigma: B \to E,
\end{equation}
such that
\begin{equation}
  \pi \circ \sigma= I.
\end{equation}
\end{definition}

In this paper, we will only encounter fiber bundles that are globally trivial at the level of
the strata we consider. Throughout the paper, we use the notation $\pi$ for bundle projections, $s$ or $\sigma$ for sections, and the symbol $\cong$ to denote diffeomorphisms (smooth and bijective maps) of manifolds.

The way we have defined the various spaces makes our setting to admit a natural fiber bundle structure with total space: $\mathcal G^\mu_n$, base space: $\mathcal C^\mu_n$ and the fiber as $Sp/\mathcal F_n^\mu$ with the continuous surjection map $\pi_g:\mathcal G^\mu_n\to\mathcal C^\mu_n$ defined as $\pi_g(SDS^T)=D$ and a section $\sigma_g: C^\mu_n\to\mathcal G^\mu_n $ as $\sigma_g(D)=D$. Now the idea would be to see this fiber as a trivial one by inducing a different section, which comes from looking at $Sp$. For doing so, the graphical calculus of Gaussian pure states comes in handy, which we review here very abstractly. The main idea is to consider a fiber bundle with total space: $Sp$, base space: $\mathfrak h_n$ and the fiber as $O_{sp}$. Here, $\mathfrak h_n:=\{ X+iY| Y>0; X=X^T; X\in\mathbb{R}^{n\times n}; Y\in\mathbb{R}^{n\times n}  \}$ is called the Siegel upper half-space and $O_{sp}$ is $Sp \cap O(2n,R)$ where $O(2n,R)$ is the orthogonal group. $O_{sp}$ basically contains the passive optics elements and is isomorphic to $U(n)$\cite{Arvind_1995}. This fiber bundle structure has an interesting choice of section $\sigma_s: \mathfrak h_n\to Sp$ which comes from the pre-Iwasawa decomposition of Symplectic matrices. To make this explicit, we state the decomposition as 
\be
S=\begin{pmatrix} I & 0\\
X& I\end{pmatrix}\begin{pmatrix} Y^{-1/2}& 0\\
0& Y^{1/2}\end{pmatrix}O=S_Z O,
\ee
where $S\in Sp$, $X+iY\in \mathfrak h_n$ and $O\in O_{sp}$. This naturally gives us the desired section $\sigma_s(Z)=S_Z$ as well as the continuous surjection map $\pi_s: Sp\to\mathfrak h_n$ as $\pi_s(S_ZO)=Z$. This section induces a global trivialization of the fiber bundle as $Sp\cong \mathfrak h_n\times O_{sp}$. Here, trivialization means that our total space breaks into a direct product of the base space and the fiber. The other fiber is also similarly a trivial one by the fact that $\mathcal{G}^\mu_n\cong \mathcal C^\mu_n\times Sp/\mathcal F^\mu_n$. Now, utilizing the trivialization of $Sp$ we conclude that 
\be
\mathcal{G}^\mu_n\cong \mathcal C^\mu_n \times \mathfrak h_n \times U(n)/\mathcal F^\mu_n,
\ee
This isomorphism serves as an extension of the Siegel upper half-space to include even the mixed states. More precisely, we associate a tuple of information to our state $(Z,\Gamma)$ where $Z\in\mathfrak h_n$ and $\Gamma\in \mathcal{C}^\mu_n\times U(n)/\mathcal{F}^\mu_n$. It is important to note that, given a fixed state, we also fix the multiplicity pattern $\mu$, but the entire set of Gaussian states is still a disjoint union over $\mathcal G^\mu_n$. This representation of Gaussian states has the nice property that $\{(Z,\frac{I}{2})|Z\in \mathfrak h_n\}$ describes the entire set of pure states, and viewing $Z$ as a complex adjacency matrix for a state recovers the standard graphical calculus \cite{Menicucci_2011}. Further, we also state the general transformation rule of such a representation under a Gaussian unitary.
\begin{figure}[h]
\begin{centering}
\includegraphics[scale=0.5]{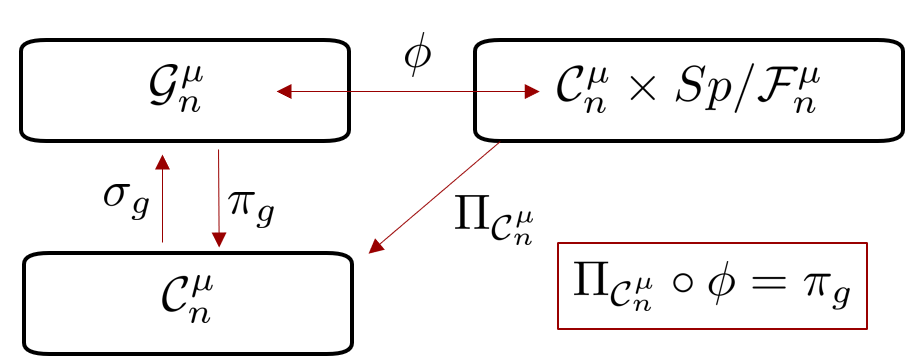}
\caption{\label{fig:epsart} Various spaces and their commutative diagram. Here, $\phi$ is the map that defines the trivialization. This is for the set of Gaussian states.}
\end{centering}
\end{figure}
\begin{figure}[h]
\begin{centering}
\includegraphics[scale=0.5]{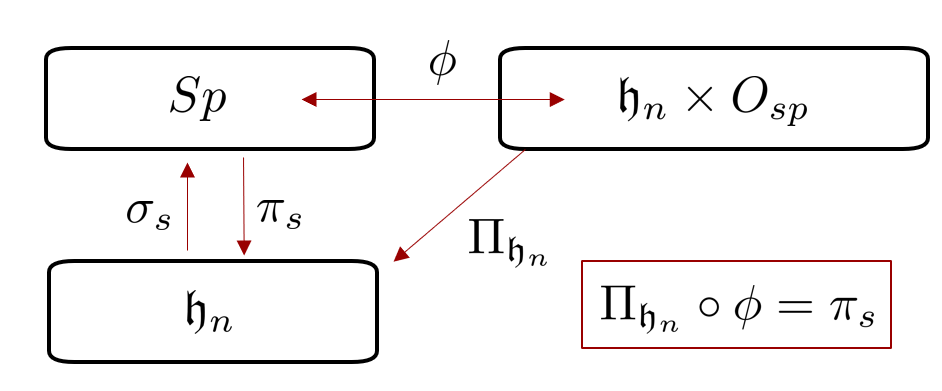}
\caption{\label{fig:epsart} Various spaces and their commutative diagram. Here, $\phi$ is the map that defines the trivialization. This is for the Symplectic group.}
\end{centering}
\end{figure}
\begin{theorem}
    Consider the symplectic matrix $T=\begin{pmatrix} A & B\\
C& D\end{pmatrix}$ corresponding to a Gaussian transformation, then the state transformation $(Z,\Gamma)\to_T (Z',\Gamma')$ is given by:
\begin{itemize}
    \item $Z'=(C+DZ)(A+BZ)^{-1}$,
    \item $\Gamma'=O\Gamma O^T$ with $O=S^{-1}_{Z'}TS_Z\in O_{sp}$.
\end{itemize}
\end{theorem}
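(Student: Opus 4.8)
The plan is to track how the covariance matrix transforms under $T$ and then read off the induced action on the labels through the two trivializations. Recall that the label $(Z,\Gamma)$ encodes the covariance matrix as $V=S_Z\,\Gamma\,S_Z^T$, where $S_Z=\sigma_s(Z)$ is the pre-Iwasawa section and $\Gamma=ODO^T$, with $D\in\mathcal C^\mu_n$ the Williamson-diagonal symplectic spectrum and $O\in O_{sp}$ a coset representative; this is well defined on $U(n)/\mathcal F^\mu_n$ because any $F\in\mathcal F^\mu_n=\mathrm{Stab}(D)$ gives $OFDF^TO^T=ODO^T$. Under the Gaussian unitary with symplectic matrix $T$ the covariance matrix transforms by congruence, $V\mapsto V'=TVT^T=(TS_Z)\,\Gamma\,(TS_Z)^T$, so the whole problem reduces to re-expressing $TS_Z$ in pre-Iwasawa form.

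First I would apply the pre-Iwasawa decomposition to $TS_Z\in Sp$, writing $TS_Z=S_{Z'}\,O$ with $Z'\in\mathfrak h_n$ and $O\in O_{sp}$; by uniqueness of this decomposition, $Z'=\pi_s(TS_Z)$ and $O=S_{Z'}^{-1}TS_Z$, which is exactly the claimed formula for $O$. Substituting yields $V'=S_{Z'}\,(O\Gamma O^T)\,S_{Z'}^T$, so the trivialization reads off the new label as $(Z',\Gamma')$ with $\Gamma'=O\Gamma O^T$. Since $O\in O_{sp}\subset Sp$, congruence by $O$ preserves the symplectic spectrum, so $D'=D$ and $\Gamma'$ stays in the same stratum $\mathcal C^\mu_n\times U(n)/\mathcal F^\mu_n$. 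This settles the $\Gamma$-rule and leaves only the identification of $Z'$ with the Möbius expression.

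For the $Z$-rule I would characterize the Siegel point attached to a symplectic matrix via the complex Lagrangian subspace it generates. Using the explicit section $S_Z=\begin{pmatrix} Y^{-1/2} & 0\\ XY^{-1/2}& Y^{1/2}\end{pmatrix}$, a direct computation gives that $S_Z\begin{pmatrix} I \\ iI\end{pmatrix}$ spans the same column space as $\begin{pmatrix} I \\ Z\end{pmatrix}$, so $Z$ is recovered as the slope $QP^{-1}$ of the image of the reference isotropic subspace $\begin{pmatrix} I \\ iI\end{pmatrix}$. The key point is that $O_{sp}$ fixes this reference subspace: writing $O=\begin{pmatrix}\mathcal A & -\mathcal B \\ \mathcal B & \mathcal A\end{pmatrix}$ with $\mathcal A+i\mathcal B\in U(n)$ under $O_{sp}\cong U(n)$, one finds $O\begin{pmatrix} I \\ iI\end{pmatrix}=\begin{pmatrix} I \\ iI\end{pmatrix}(\mathcal A-i\mathcal B)$, i.e.\ the subspace is preserved up to an invertible right factor. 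Hence $TS_Z\begin{pmatrix} I \\ iI\end{pmatrix}$ and $S_{Z'}\begin{pmatrix} I \\ iI\end{pmatrix}$ span the same subspace; computing the left-hand side directly as $T\begin{pmatrix} I \\ Z\end{pmatrix}=\begin{pmatrix} A+BZ \\ C+DZ\end{pmatrix}$ then forces $Z'=(C+DZ)(A+BZ)^{-1}$.

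The main obstacle is this last step: establishing that the action on $Z$ is fractional-linear and, crucially, that it is \emph{insensitive} to the $O_{sp}$-factor in the pre-Iwasawa decomposition. This is precisely where the isomorphism $O_{sp}\cong U(n)$ and the $O_{sp}$-invariance of the reference complex Lagrangian $\begin{pmatrix} I \\ iI\end{pmatrix}$ are indispensable; combined with the standard fact that the symplectic action preserves the Siegel domain (which guarantees $A+BZ$ is invertible and $Z'\in\mathfrak h_n$), they close the argument. By contrast, the $\Gamma$-rule and the congruence transformation of $V$ are routine once the bundle bookkeeping of Section~II is in place.
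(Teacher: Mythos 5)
Your proposal is correct, and its skeleton is exactly the paper's: transform the covariance by congruence, $V'=(TS_Z)\Gamma(TS_Z)^T$, apply the pre-Iwasawa decomposition $TS_Z=S_{Z'}O$, and read off $\Gamma'=O\Gamma O^T$ with $O=S_{Z'}^{-1}TS_Z$. The one genuine difference is in how the M\"obius rule $Z'=(C+DZ)(A+BZ)^{-1}$ is obtained: the paper simply cites it as the known action of $Sp$ on $\mathfrak h_n$ (Menicucci et al.), whereas you derive it from scratch by characterizing $Z$ as the slope of the complex Lagrangian $S_Z\begin{pmatrix} I\\ iI\end{pmatrix}\sim\begin{pmatrix} I\\ Z\end{pmatrix}$ and checking that $O_{sp}$ fixes the reference subspace $\begin{pmatrix} I\\ iI\end{pmatrix}$ up to the invertible right factor $\mathcal A-i\mathcal B$. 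This buys you two things the paper leaves implicit: a self-contained proof that the induced action on the base is fractional-linear and independent of the $O_{sp}$ gauge, and an explicit reason (via $S_{Z'}$ being determined by its Lagrangian slope and the uniqueness of the trivialization $Sp\cong\mathfrak h_n\times O_{sp}$) why the two definitions of $Z'$ — the one from $\pi_s(TS_Z)$ and the M\"obius formula — must coincide. You also verify stratum preservation ($O_{sp}$ congruence fixes the symplectic spectrum) and well-definedness on $U(n)/\mathcal F_n^\mu$, points the paper's two-line proof glosses over. The only ingredient you import without proof is the invertibility of $A+BZ$ (i.e., that $Sp$ preserves the Siegel domain), which is a strictly weaker citation than the paper's, so your argument is, on net, more elementary and more complete.
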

\begin{proof}
    Given a covariance matrix $V=S_Z\Gamma S_Z^T$ under Gaussian unitaries, it transforms as $V'=TS_Z\Gamma(TS_Z)^T$. Now $TS_Z$ is a new symplectic matrix and can again be decomposed (pre-Iwasawa) into $TS_Z=S_{Z'}O$. Now the action of the group $Sp$ over $\mathfrak h_n$ is given by the generalized Möbius transformation $Z' =(C+DZ)(A+BZ)^{-1}$\cite{Menicucci_2011}. Now realizing $\Gamma'=O\Gamma O^T$ completes the proof.  
\end{proof}

\section{\label{sec:level4}Splitting of Bures metric and QFI\protect}
For the Bures metric for Gaussian states, we take the definition as~\cite{Banchi_2015}
\begin{equation}
    ds^2=2(1-\mathcal{F}(\rho,\rho+d\rho))=\frac{1}{2}\tr[dV(4\mathcal{L}_V+\mathcal L_\Omega)^{-1}(dV)],
\end{equation}
where $\rho$ has covariance matrix $V$ and $\rho+d\rho$ has covariance matrix $V+dV$. Here, $\mathcal L_M(N):=MNM$ and the inverse is the pseudo-inverse. For convenience, we label the operator inside pseudo-inverse as $\mathcal{M}_V=4\mathcal L_V+\mathcal L_\Omega$. Also, $\mathcal F$ stands for the standard \textit{Uhlmann fidelity}. For pure states, we have $ds^2_{pure}=\frac{1}{8}\tr[(V^{-1}dV)^2]$ \cite{Banchi_2015}. Given that pure states are fully described by $(Z, I/2)$, the Bures metric must be completely described by graph parameters. Our first theorem asserts this.\\
\begin{theorem}\label{thanos1}
    For pure Gaussian states Bures metric is proportional to the standard Riemannian metric on the Siegel upper half-space given by $ds^2_{Siegel}=\tr[Y^{-1}dZY^{-1}dZ^*]$.
\end{theorem}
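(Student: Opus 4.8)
The plan is to start from the pure-state Bures metric $ds^2_{pure}=\frac{1}{8}\tr[(V^{-1}dV)^2]$ and insert the explicit pure-state covariance coming from the pre-Iwasawa section, namely $V=S_Z(I/2)S_Z^T=\frac{1}{2}S_Z S_Z^T$. The first move I would make is to avoid computing $V^{-1}$ and $dV$ as bare block matrices, and instead pass to the symplectic Lie algebra. Writing $\xi:=S_Z^{-1}dS_Z\in\mathfrak{sp}(2n,\mathbb{R})$, a short manipulation using $dS_Z=S_Z\xi$ (so that $dS_Z^T=\xi^T S_Z^T$) gives $V^{-1}dV=(S_Z^T)^{-1}(\xi+\xi^T)S_Z^T$, i.e.\ $V^{-1}dV$ is conjugate to the symmetric part $\xi+\xi^T$. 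Cyclicity of the trace then collapses the metric to the frame-independent form
\begin{equation}
  ds^2_{pure}=\frac{1}{8}\tr\!\big[(\xi+\xi^T)^2\big],
\end{equation}
which is the key simplification: all dependence on the orthogonal-symplectic part of $S_Z$ drops out, consistent with the fact that pure states are labelled by $Z$ alone.

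Next I would compute $\xi$ from the factorized section $S_Z=LA$ with $L=\begin{pmatrix} I & 0\\ X & I\end{pmatrix}$ and $A=\mathrm{diag}(Y^{-1/2},Y^{1/2})$, giving $\xi=A^{-1}(L^{-1}dL)A+A^{-1}dA$. Introducing $P:=Y^{1/2}$, the lower-triangular piece contributes the off-diagonal block $P^{-1}dX\,P^{-1}$, while the diagonal piece produces $-dP\,P^{-1}$ and $P^{-1}dP$ on the two diagonal blocks. Symmetrizing, one finds the clean block form
\begin{equation}
  \xi+\xi^T=\begin{pmatrix} -B & C\\ C & B\end{pmatrix},\qquad
  B=P^{-1}dP+dP\,P^{-1},\quad C=P^{-1}dX\,P^{-1},
\end{equation}
with $B,C$ symmetric. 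Squaring and tracing gives $\tr[(\xi+\xi^T)^2]=2\tr[B^2]+2\tr[C^2]$. The $C$-term is immediately $\tr[C^2]=\tr[Y^{-1}dX\,Y^{-1}dX]$, and expanding the Siegel metric $\tr[Y^{-1}dZ\,Y^{-1}dZ^*]$ (whose imaginary cross-term cancels because $Y^{-1},dX,dY$ are symmetric) shows it equals $\tr[Y^{-1}dX\,Y^{-1}dX]+\tr[Y^{-1}dY\,Y^{-1}dY]$. It therefore remains only to identify $\tr[B^2]$ with $\tr[Y^{-1}dY\,Y^{-1}dY]$, after which $ds^2_{pure}=\frac{1}{4}\,ds^2_{Siegel}$ follows, establishing the proportionality with constant $1/4$.

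The hard part will be the $B$-block, precisely because $Y$ and $dY$ need not commute, so $d(Y^{1/2})$ is not simply $\frac{1}{2}Y^{-1/2}dY$ and has no closed form. My strategy is to never solve for $dP$ explicitly: instead I would use only the defining Sylvester relation $dY=dP\,P+P\,dP$ together with cyclicity of the trace. Expanding both $\tr[B^2]$ and $\tr[Y^{-1}dY\,Y^{-1}dY]$ in terms of the two symmetric building blocks $\tr[P^{-1}dP\,P^{-1}dP]$ and $\tr[P^{-2}(dP)^2]$, I expect each to reduce to $2\tr[P^{-1}dP\,P^{-1}dP]+2\tr[P^{-2}(dP)^2]$, so that the two coincide term by term without any square-root derivative ever being evaluated. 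This is the single place where noncommutativity could obstruct the argument, and checking that the four cross-terms pair up correctly under cyclic permutation is the crux of the proof.
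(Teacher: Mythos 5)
Your proposal is correct, and it takes a genuinely different route from the paper's proof. The paper proceeds by brute force: it writes out $V=S_ZS_Z^T$ blockwise as $\begin{pmatrix} Y^{-1} & Y^{-1}X\\ XY^{-1} & Y+XY^{-1}X\end{pmatrix}$ (note the square roots in $S_Z$ cancel, so no $Y^{1/2}$ ever appears), computes $dV$ and $V^{-1}dV$ block by block, and reads off $\tr[(V^{-1}dV)^2]/2=\tr[(Y^{-1}dX)^2]+\tr[(Y^{-1}dY)^2]=ds^2_{Siegel}$ using only cyclicity and symmetry of $X,Y,dX,dY$. You instead conjugate into the Lie algebra: the identity $V^{-1}dV=(S_Z^T)^{-1}(\xi+\xi^T)S_Z^T$ with $\xi=S_Z^{-1}dS_Z$ collapses the metric to $\frac18\tr[(\xi+\xi^T)^2]$, which is both correct and structurally illuminating — it shows that only the symmetric (Cartan $\mathfrak p$, i.e.\ active) part of the Maurer--Cartan form contributes, which is exactly the pure-state case of the paper's later even--odd splitting (Theorem \ref{x3} with $\Gamma=I/2$, where $dv_e=0$ and $dv_o=\{s,I/2\}=s$). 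So your route makes the Cartan-decomposition theme visible already at the pure-state level, whereas the paper's proof keeps it hidden until the mixed-state theorem. The price you pay is the appearance of $P=Y^{1/2}$ and its differential, which the paper's parametrization avoids entirely; your workaround for this crux is sound: expanding $\tr[B^2]$ with $B=P^{-1}dP+dP\,P^{-1}$ and expanding $\tr[(Y^{-1}dY)^2]$ with $Y^{-1}dY=P^{-2}dP\,P+P^{-1}dP$, both reduce under cyclicity to
\begin{equation}
  2\tr[P^{-1}dP\,P^{-1}dP]+2\tr[P^{-2}(dP)^2],
\end{equation}
so the identification holds without ever solving the Sylvester relation $dY=dP\,P+P\,dP$ for $dP$. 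Your final constant $ds^2_{pure}=\frac14 ds^2_{Siegel}$ also agrees with the paper's computation ($\tr[(V^{-1}dV)^2]=2\,ds^2_{Siegel}$, hence $ds^2_{pure}=\frac14 ds^2_{Siegel}$), so the two proofs are fully consistent; yours is less elementary but more transparently tied to the symplectic geometry the paper develops afterwards.
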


The proof is deferred to the appendix \ref{app1}. This is intuitively valid because the manifold of pure states is isomorphic to $\mathfrak h_n$. Bures metric also connects to the expression of QFI and we obtain it completely using graphical parameters:
\be
QFI(t)=\frac{1}{2}\tr[Y^{-1}\frac{d(Z)}{dt}Y^{-1}\frac{dZ^*}{dt}]
\ee
Physically, this equation bears a lot of importance as it conveys that if we initially start with a Gaussian graph state $Z$ and pass it through a network of Gaussian unitaries characterized by multiple parameters $\vec \theta=(\theta_i)_i$ then the output graph $Z(\vec\theta)$ completely quantifies the QFI. More precisely, we have the QFI matrix (QFIM) as:
\begin{equation}
    F_{ab}(\vec \theta)=\frac{1}{2}\tr[Y^{-1}(\vec\theta)\frac{d(Z(\vec\theta))}{d\theta_a}Y^{-1}\frac{d(Z^*(\vec\theta))}{d\theta_b}].
\end{equation}

For the next subsection, it is important to observe that $ds^2$ is invariant under Gaussian unitary transformations such that $V\to SVS^T$ where $S$ is some symplectic matrix. More precisely, we have,
\begin{equation}
\begin{split}
    ds^2&=2(1-\mathcal{F}(U\rho U^\dagger,U(\rho+d\rho)U^\dagger)),\\
    &=\frac{1}{2}\tr[SdV S^T(\mathcal M_{SVS^T})^{-1}(SdV S^T)].
    \end{split}
\end{equation}

\subsection{Splitting of Bures metric for $(Z,\Gamma)$ representation}
We would be dealing with variations over our state space parametrized by $(Z,\Gamma)$. Variations along the graph part would connect with variations over the symplectic group which lies within the Lie-algebra $\mathfrak{sp}_{2n}:=\{X|X\Omega+\Omega X^T=0\}$. This algebra admits a Cartan decomposition of the form $\mathfrak{sp}_{2n}=\mathfrak l\oplus \mathfrak{p}$ where the first set $(\mathfrak l)$ is the restriction to all skew-symmetric matrices and the second $(\mathfrak p)$ is the restriction to all symmetric matrices \cite{hall2000elementaryintroductiongroupsrepresentations}. This orthogonal decomposition is what gets mapped down to the even-odd split, as we would see later explicitly. To get a flavor of this splitting one can think about the passive versus active components of linear optics. We know that passive elements belong to $O_{sp}$ and hence the Lie algebra of this subgroup is simply contained in $\mathfrak{l}$. This means no component can be present from the orthogonal subspace if the QFI splitting is done correctly. Our QFI splitting leverages this to separate sensing of active vs passive elements. For the rest of the discussion, we would set up the notation as: given a $w\in \mathfrak{sp}_{2n}$, the symmetric part will be denoted as $s\in\mathfrak p$ and the skew-symmetric part as $a\in\mathfrak l$. Then we have the following theorem on how the Bures metric separates into these two components.
\begin{theorem}\label{x3}
    For any Gaussian state given by the representation $(Z,\Gamma)$ we have $ds^2=ds^2_e+ds^2_o$ such that:
    \begin{equation*}
        \begin{split}
            ds^2_e&=\frac{1}{2}\tr[(d\Gamma+[a,\Gamma])(\mathcal M_\Gamma)^{-1}(d\Gamma+[a,\Gamma]) ],\\
            ds^2_o&=\frac{1}{2}\tr[(\{s,\Gamma\})(\mathcal M_\Gamma)^{-1}(\{s,\Gamma\})],
        \end{split}
    \end{equation*}
    where $w:=S^{-1}_Zd(S_Z)\in \mathfrak{sp}_{2n}$.
\end{theorem}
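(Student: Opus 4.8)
The plan is to pull everything back to the Williamson-type frame in which the covariance matrix is simply $\Gamma$, and then exploit a parity grading with respect to $\Omega$. First I would differentiate $V = S_Z\Gamma S_Z^T$ using $w = S_Z^{-1}d(S_Z)$, which gives $dV = S_Z\,(d\Gamma + w\Gamma + \Gamma w^T)\,S_Z^T$. Since $S_Z\in Sp$ its inverse is symplectic, so the invariance of the Bures metric established above (with the choice $S=S_Z^{-1}$) collapses the conjugating factors and replaces $\mathcal{M}_V$ by $\mathcal{M}_\Gamma$, yielding $ds^2 = \tfrac12\tr[(d\Gamma + w\Gamma + \Gamma w^T)\,\mathcal{M}_\Gamma^{-1}\,(d\Gamma + w\Gamma + \Gamma w^T)]$. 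Writing the Cartan split $w = s + a$ with $s\in\mathfrak{p}$ symmetric and $a\in\mathfrak{l}$ skew, and using $s^T=s$, $a^T=-a$ together with $\Gamma^T=\Gamma$, the bracketed quantity becomes $d\Gamma + \{s,\Gamma\} + [a,\Gamma]$, so the even block $E:=d\Gamma+[a,\Gamma]$ and the odd block $O:=\{s,\Gamma\}$ emerge exactly as in the statement.

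The heart of the argument is that $E$ and $O$ lie in opposite eigenspaces of the parity involution $P(M):=\Omega M\Omega^{-1}$, and that $\mathcal{M}_\Gamma$ respects this grading. I would first note that every admissible $\Gamma$ commutes with $\Omega$: it has the form $ODO^T$ with $O\in O_{sp}$ (hence $[O,\Omega]=0$) and $D=\mathrm{diag}(K,K)$ (for which a one-line block computation gives $[D,\Omega]=0$), so $P(\Gamma)=\Gamma$; differentiating the constraint $[\Gamma,\Omega]=0$ shows $d\Gamma$ is likewise even. The Cartan decomposition supplies the rest: the defining relation $w\Omega+\Omega w^T=0$ forces $\mathfrak{l}$ (skew part) to commute with $\Omega$ and $\mathfrak{p}$ (symmetric part) to anticommute with it, whence $P([a,\Gamma])=[a,\Gamma]$ while $P(\{s,\Gamma\})=-\{s,\Gamma\}$. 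Thus $P(E)=E$ and $P(O)=-O$.

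It then remains to show that the cross terms in the expanded quadratic form vanish. The key observations are that $\mathcal{L}_\Omega=-P$ (because $\Omega^{-1}=-\Omega$) and that $\mathcal{L}_\Gamma$ commutes with $P$ (because $P(\Gamma)=\Gamma$), so $\mathcal{M}_\Gamma=4\mathcal{L}_\Gamma+\mathcal{L}_\Omega$ commutes with the self-adjoint involution $P$; consequently its pseudo-inverse does too, and $\mathcal{M}_\Gamma^{-1}$ maps the even (resp. odd) sector to itself. Finally, for any even $A$ and odd $B$ the Hilbert--Schmidt pairing vanishes, since $\tr[AB]=\tr[P(A)P(B)]=-\tr[AB]$ by cyclicity of the trace. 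Applying this to $A=E$ with $B=\mathcal{M}_\Gamma^{-1}O$ (odd) and to $A=O$ with $B=\mathcal{M}_\Gamma^{-1}E$ (even) kills both cross terms, leaving $ds^2 = \tfrac12\tr[E\,\mathcal{M}_\Gamma^{-1}E] + \tfrac12\tr[O\,\mathcal{M}_\Gamma^{-1}O] = ds^2_e + ds^2_o$.

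I expect the main obstacle to be the careful bookkeeping around the pseudo-inverse: one must verify that $\mathcal{M}_\Gamma$ is self-adjoint in the Hilbert--Schmidt inner product and that its kernel and range are $P$-invariant, so that defining $\mathcal{M}_\Gamma^{-1}$ sector by sector is legitimate and parity-preservation genuinely descends to the pseudo-inverse. A secondary point needing care is confirming that the tangent objects $E$ and $O$ lie in the range of $\mathcal{M}_\Gamma$ (so that any kernel ambiguity is harmless), and checking that the $\Gamma$-variation $d\Gamma$ really stays within the $\Omega$-commuting subspace rather than leaking into the odd sector.
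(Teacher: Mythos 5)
Your proposal is correct and takes essentially the same approach as the paper's proof: pull the metric back to the $\Gamma$-frame using symplectic invariance, Cartan-split $w=s+a$ so that $dv_e=d\Gamma+[a,\Gamma]$ and $dv_o=\{s,\Gamma\}$ land in the $\pm 1$ eigenspaces of the parity involution $P_\Omega$, and kill the cross terms because $\mathcal M_\Gamma$ and hence its pseudo-inverse commute with $P_\Omega$. Your extra explicit checks (that $[\Gamma,\Omega]=0$ and $[d\Gamma,\Omega]=0$ follow from $\Gamma=ODO^T$, and that $\mathcal L_\Omega=-P_\Omega$) merely spell out steps the paper leaves implicit.
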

Related to this, we also have a corollary:
\begin{corollary}\label{xx1d}
    For pure states which are of the form $(Z,I/2)$, we have $ds^2_e=0$ and hence $ds^2_{pure}=ds^2_o\propto ds^2_{Siegel}$
\end{corollary}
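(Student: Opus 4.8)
The plan is to reduce the corollary to the two theorems already in hand, with the only genuine computation being the vanishing of the even part. First I would substitute the pure-state data $\Gamma=I/2$ directly into the even-part formula of Theorem~\ref{x3}, namely $ds^2_e=\frac{1}{2}\tr[(d\Gamma+[a,\Gamma])(\mathcal M_\Gamma)^{-1}(d\Gamma+[a,\Gamma])]$. Two observations collapse the argument of the trace. Since $I/2$ is a scalar multiple of the identity it is central, so $[a,I/2]=0$ for every $a\in\mathfrak l$. And because the pure-state manifold is exactly the single stratum $\mu=(n)$ on which all symplectic eigenvalues are pinned to $1/2$, the spectral coordinate $\Gamma$ is constant at $I/2$ across all pure states; hence any admissible variation has $d\Gamma=0$. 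Together these give $d\Gamma+[a,\Gamma]=0$ and therefore $ds^2_e=0$.

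With the even part gone, I would invoke the additive splitting $ds^2=ds^2_e+ds^2_o$ of Theorem~\ref{x3} to conclude $ds^2=ds^2_o$; since for pure states $ds^2$ is by definition $ds^2_{pure}$, this yields $ds^2_{pure}=ds^2_o$. Finally, Theorem~\ref{thanos1} already identifies $ds^2_{pure}\propto ds^2_{Siegel}$, so chaining the two equalities gives $ds^2_o\propto ds^2_{Siegel}$, which is the claim.

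As an independent consistency check (and to expose the proportionality constant), I would also compute $ds^2_o$ at $\Gamma=I/2$ directly. Here $\{s,I/2\}=s$, and using that the symmetric generators $s\in\mathfrak p$ anti-commute with $\Omega$ one finds $\mathcal M_{I/2}(s)=s+\Omega s\Omega=2s$, so $s$ is an eigenvector of $\mathcal M_{I/2}$ with eigenvalue $2$; in particular it lies in the range, and the pseudo-inverse acts as multiplication by $1/2$, giving $ds^2_o=\frac{1}{4}\tr[s^2]$. Translating the symmetric part $s$ of $w=S^{-1}_Z d(S_Z)$ back through the pre-Iwasawa section should reproduce $\tr[Y^{-1}dZ\,Y^{-1}dZ^*]$ up to the expected constant, matching Theorem~\ref{thanos1}.

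The main obstacle is not algebraic but geometric: justifying $d\Gamma=0$ rigorously. One must confirm that on the pure-state stratum every tangent direction is purely a graph (odd) direction, i.e. that the residual fiber $U(n)/\mathcal F_n^\mu$ degenerates to a point when $\mu=(n)$, so that no variation of $\Gamma$ is available. Making this explicit—rather than simply assuming the variation stays within $\Gamma=I/2$—is the step that requires the stratification and fiber-bundle picture of Sec.~II, and is where I would be most careful.
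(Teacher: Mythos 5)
Your proof is correct, and its skeleton (specialize Theorem~\ref{x3} to $\Gamma=I/2$, then chain with Theorem~\ref{thanos1}) is the same as the paper's implicit one; the difference lies in the mechanism for the key step $ds^2_e=0$. You annihilate the even \emph{tangent vector}: $[a,I/2]=0$ by centrality, and $d\Gamma=0$ because a curve of pure states has $\Gamma(t)\equiv I/2$. The paper instead annihilates the even \emph{metric weight}: at $\Gamma=I/2$ the superoperator $\mathcal M_{I/2}=4\mathcal L_{I/2}+\mathcal L_\Omega$ sends every even matrix $X$ to $X+\Omega X\Omega=X-X=0$, so the entire even sector lies in the kernel of $\mathcal M_{I/2}$ and the pseudo-inverse kills it --- this is the paper's ``vanishes because of the use of pseudo-inverse'' remark, equivalently the degeneracy $\alpha^-_{ij}=4k_ik_j-1=0$ at $k_i=k_j=\tfrac12$ in Theorem~\ref{will decom}. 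The two routes buy different things: yours is more elementary and makes transparent that pure-state curves are purely odd, but it establishes $ds^2_e=0$ only for variations tangent to the pure manifold (which is indeed all that the clause $ds^2_{pure}=ds^2_o$ requires); the paper's kernel argument gives $ds^2_e=0$ for \emph{arbitrary} variations at a pure state, including purity-breaking ones, which is the reading that ties the corollary to the divergence/discontinuity discussion of the even QFI near pure states. Two small further remarks. First, your closing worry is overcautious: $d\Gamma=0$ needs no extra bundle-theoretic justification beyond the paper's own statement that the pure states are exactly $\{(Z,I/2)\mid Z\in\mathfrak h_n\}$, so any curve of pure states has constant $\Gamma$. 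Second, the pure manifold is not ``exactly the stratum $\mu=(n)$'': that stratum contains all $\Gamma=kI$ with $k\ge\tfrac12$, and the pure states form only its boundary slice $k=\tfrac12$; this imprecision does not affect your argument. Your consistency check is also sound --- $\mathcal M_{I/2}(s)=2s$ matches the odd weight $\alpha^+_{ij}=2$, and the resulting $ds^2_o=\tfrac14\tr[s^2]$ is proportional to the Siegel metric via the pre-Iwasawa identity $\tr[(V^{-1}dV)^2]=4\tr[s^2]$.
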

The above theorem and the corollary show that as long as we are traveling over the manifold of pure states, even contribution to QFI cannot arise. In other words, having a variation in an even part suggests that we have traversed the mixed states too during the sensing process. The expressions for even ($QFI_e$) and odd ($QFI_o$) QFI can be obtained from the Bures metric by replacing $d\Gamma\to \dot \Gamma$ and $w\to S_Z^{-1}\dot{S_Z}$ with an added factor of 4 as per convention. The main proof of the above theorem is provided in the appendix \ref{app1}, but here we just convey the essential steps:  The idea is to combine two facts: (i) the Bures metric is invariant under
Gaussian unitaries (symplectic transformations on the covariance), and (ii) the symplectic Lie algebra admits a Cartan decomposition into \textit{passive} and \textit{active} directions which behave with opposite parity under the symplectic form~$\Omega$ where by even parity of some operator $M$ we mean $P_\Omega(M):=\Omega M\Omega^T=+M$. Likewise, odd parity means $P_\Omega(M)=-M$. First, by symplectic invariance, we are free to work in the frame where the
covariance is $\Gamma$ rather than $V = S_Z \Gamma S_Z^T$. In this frame, the
infinitesimal variation of the covariance induced by a parameter change splits
naturally into two contributions: one coming from changes in the spectral data
$\Gamma$ itself and from passive rotations inside degenerate eigenspaces, and
another coming from genuinely active (squeezing-type) deformations of the
symplectic frame. Using the Cartan decomposition of the symplectic generator
$w := S_Z^{-1} dS_Z$ into a skew-symmetric part $a$ (passive) and a symmetric
part $s$ (active), this yields a decomposition of the tangent vector to the
Gaussian manifold into an \textit{even} part $dv_e$ and an \textit{odd} part $dv_o$. We then fix an \textit{even frame} for~$\Gamma$, chosen so that it commutes with the
symplectic form~$\Omega$. In this frame, the even and odd pieces of the tangent
vector have opposite parity under conjugation by~$\Omega$: the even part
commutes with~$\Omega$, while the odd part anticommutes with~$\Omega$. This
parity defines a $\mathbb Z_2$ grading on the space of symmetric matrices.
The operator that appears in the Bures metric (the superoperator acting on
covariance variations) is built only from $\Gamma$ and $\Omega$, and therefore
commutes with this parity involution. As a result, it is block-diagonal with
respect to the even/odd splitting. When we evaluate the Bures quadratic form on the sum of the even and odd tangent components, the mixed term must vanish because it couples vectors of
opposite parity through a parity-preserving operator. What remains is a sum of
two nonnegative contributions: one supported entirely on the even sector, and
one supported entirely on the odd sector. These are precisely the two pieces
$ds_e^2$ and $ds_o^2$ stated in the theorem. The explicit formulas follow from
writing out the even and odd parts in the $(Z,\Gamma)$ representation.

Theorem \ref{x3} leads to an important consequence in regards to sensing of active versus passive unitary circuits. Physically, if we start from a thermal state and only perform linear interferometers or (passive unitaries) then we always evolve in a way that $QFI_o=0$. In other words, having an odd part suggests that our evolution must have some \textit{active} components that are being used. By definition, passive unitary symplectic matrices are generated by $e^X$ for $X\in \mathfrak l$. These form the subgroup $O_{sp}$. Then we have 
\begin{corollary}
    Let $V(t)=S(t)\Gamma S^T(t)$ and we are sensing $t$ then
    \be
    \begin{split}
    \forall t, S(t)\in O_{sp}&\implies QFI_o=0.
    \end{split}
    \ee
\end{corollary}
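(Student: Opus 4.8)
The plan is to feed the hypothesis directly into the odd formula of Theorem~\ref{x3}, whose odd sector is sourced solely by the symmetric (active) Cartan component of the symplectic generator. First I would differentiate the trajectory along $t$: writing $w(t):=S^{-1}(t)\dot S(t)$ and using $\dot S=Sw$, one finds $\dot V=S\bigl(w\Gamma+\Gamma w^{T}\bigr)S^{T}$, and the splitting $w\Gamma+\Gamma w^{T}=[a,\Gamma]+\{s,\Gamma\}$ (with $w=a+s$, $a\in\mathfrak l$, $s\in\mathfrak p$) holds verbatim with $S$ in place of $S_Z$, since the theorem's derivation uses nothing about the section beyond Bures invariance and the Cartan split. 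The whole claim thus reduces to showing that the active part $s$ of $w$ vanishes.

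Next I would argue $s=0$. The hypothesis $S(t)\in O_{sp}$ for all $t$ forces the left-logarithmic derivative $w=S^{-1}\dot S$ to lie in the Lie algebra of $O_{sp}$, which—as recorded immediately above the statement—is exactly $\mathfrak l$, the skew-symmetric symplectic matrices. Hence in the Cartan decomposition $w=a+s$ we have $s=0$ and $w=a$. Substituting into Theorem~\ref{x3} gives $ds_o^2=\tfrac12\tr[\{s,\Gamma\}(\mathcal M_\Gamma)^{-1}\{s,\Gamma\}]=0$, and after the conventional rescaling $QFI_o=0$, which is the claim.

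As a conceptual cross-check I would phrase the same fact through the parity involution $P_\Omega(M)=\Omega M\Omega^{T}$. Since $\Omega^{T}\Omega=I$, $P_\Omega$ is an algebra homomorphism and every $X\in\mathfrak{sp}_{2n}$ satisfies $P_\Omega(X)=-X^{T}$; thus $\mathfrak l$ is even and $\mathfrak p$ is odd, while $\Gamma$ and every $O\in O_{sp}$ (which commutes with $\Omega$) are even. With $w=a\in\mathfrak l$ one gets $\dot V=S[a,\Gamma]S^{T}$, and applying the homomorphism shows $P_\Omega(\dot V)=\dot V$: the velocity is purely even, so it has no projection onto the odd block on which $ds_o^2$ is supported. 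Equivalently, in the canonical $(Z,\Gamma)$ chart a passive $S(t)$ pins $Z$ at the reference point $iI$—the unique point whose pre-Iwasawa factor $S_Z$ is orthogonal—so that $w=S_Z^{-1}\dot S_Z=0$ and the odd term is identically zero.

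The arithmetic here (the parity identities and the commutator manipulations) is routine; the one step that demands genuine care is the reconciliation of the corollary's frame $V=S\Gamma S^{T}$ with the canonical $(Z,\Gamma)$ split underlying Theorem~\ref{x3}—that is, confirming that the generator feeding the odd sector really is the symmetric Cartan part of $S^{-1}\dot S$, and that $\mathcal M_\Gamma$ commutes with $P_\Omega$ so that the even--odd cross term legitimately drops. Once that bookkeeping is in place, $s=0$ closes the argument with no further estimates.
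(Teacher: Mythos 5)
Your proof is correct and takes essentially the same route as the paper: the corollary is an immediate consequence of Theorem~\ref{x3} once one observes that $S(t)\in O_{sp}$ for all $t$ forces the generator $w=S^{-1}\dot S$ into $\mathfrak l$ (the skew-symmetric part of $\mathfrak{sp}_{2n}$, which is exactly the Lie algebra of $O_{sp}$), so the symmetric Cartan component $s$ sourcing $ds_o^2$ vanishes. Your frame reconciliation and the parity cross-check (passive evolution pins $Z$ at $iI$, so $S_Z\equiv I$ and the velocity is purely even) correctly spell out what the paper leaves implicit in its remark that such states are of the form $(iI,\Gamma)$.
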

This means for states of the form $(iI,\Gamma)$, driving such states via passive optical networks does not produce any odd component of QFI. In general, if $S(t)$ is active, then it is not possible to make any analogous claim. The above splitting is dependent on the frame $S_Z$, but to make general claims and perform computations, it is useful to define another equivalent way of splitting the QFI. Whenever we have a covariance matrix $V$ that is even ($[V,\Omega]=0$), then the corresponding $\mathcal M_V$ commutes with $P_\Omega$, and this will always lead to a splitting of the Bures metric into an even and odd contribution. The main reason for this is $dV\in Sym(2n,R)$ which is space of real symmetric matrices and 
\begin{equation}
    Sym(2n,R)=ker(P_+)\bigoplus ker(P_-),
\end{equation}
where $P_\pm=\frac{1}{2}(X\pm P_\Omega(X))$ are two projectors. The best intuition behind this is to see $P_\Omega$ as a linear operator over the space $Sym(2n,R)$. This operator satisfies $P^2_\Omega(M)=M$ and hence has eigenvalues as $\pm 1$. The above spaces are basically the two eigenspaces corresponding to this decomposition. This leads to the definition of even and odd QFI.\\
\begin{definition}
    [Even and Odd QFI:]
    Given a single parameter variation of Gaussian states $\rho(t)$, we have a variation on the covariance matrix $V(t)=S(t)K(t)S(t)^T$, where $K(t)$ contains the symplectic eigenvalues of $V(t)$. Let $\dot \Sigma(t)=S^{-1}(t)\dot VS^{-T}(t)$ then:
    \begin{equation}
        \begin{split}
            QFI_e(t)&:=2\tr[P_+(\dot \Sigma(t))\mathcal M_{K(t)}^{-1}P_+(\dot \Sigma(t))],\\
             QFI_o(t)&:=2\tr[P_-(\dot \Sigma(t))\mathcal M_{K(t)}^{-1}P_-(\dot \Sigma(t))].\\
        \end{split}
    \end{equation}
    With the original QFI being the sum of two.
\end{definition}
This frame is actually related to the frame we used to split QFI for $(Z,\Gamma)$ via an $O_{sp}$ element. More precisely $K(t)=O(t)\Gamma(t)O^T(t)$ for some $O(t)\in O_{sp}$. In principle, we have defined the split in the Williamson frame in which is even means $[K(t),\Omega]=0$. For any variation of covariance matrix $V(t)$, the Williamson frame always exist and by definition it is even, now from this frame we can travel to other frames via orthogonal symplectic transformations and in each such frame we can have a splitting of QFI because for any even frame $[V,\Omega]=0$, $OVO^T$ is even with $O\in O_{sp}$. Furthermore, we would want $QFI_e$ and $QFI_o$ to be invariant under the choice of any such frames. Above, we had seen two canonical choices, with one being the $\Gamma-$frame and the other being the Williamson frame. Below we show the invariance of our splitting over such frames:\\

\begin{proposition}[ Invariance under even-frame changes]
\label{prop:frame-invariance}
Let $t \mapsto V(t)$ be a differentiable family of centered Gaussian
covariance matrices. For $i=1,2$ let $V_i(t)$
be two \emph{even frames} for $V(t)$ connected by $O(t)\in O_{sp}$, i.e.,
\[
  V(t) = S(t)V_1(t)S(t)^T=S(t)O(t)V_2(t)O(t)^TS(t)^T,
\]
Then the even/odd splitting of the QFI is independent of the
choice of such even frames $V_i$.
\end{proposition}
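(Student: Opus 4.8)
The plan is to reduce everything to the single algebraic fact that the two even frames are related by conjugation with an orthogonal symplectic $O(t)$, and then to check that each ingredient in the definition of $QFI_{e/o}$ transforms covariantly under that conjugation. I will work pointwise in $t$, which is legitimate because $\dot\Sigma$ involves only $\dot V$ conjugated by $S$ (never $\dot S$ or $\dot O$), so no derivative cross-terms of the frame can enter. For the bookkeeping, write $S_1=S$ and $S_2=SO$, so that $V = S_1 V_1 S_1^T = S_2 V_2 S_2^T$ with $V_2 = O^T V_1 O$. Since $O^{-1}=O^T$, a one-line computation gives
\[
  \dot\Sigma_2 = S_2^{-1}\dot V\,S_2^{-T}
  = O^T\!\left(S_1^{-1}\dot V\,S_1^{-T}\right)O
  = O^T\dot\Sigma_1\,O =: C_O(\dot\Sigma_1),
\]
so the two tangent covariances differ precisely by the conjugation $C_O(N):=O^TNO$.

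Next I would record the covariance (intertwining) properties of $C_O$. Because $O\in O_{sp}$ is simultaneously orthogonal and symplectic, one has $[O,\Omega]=0$, equivalently $\Omega O^T=O^T\Omega$ and $O\Omega^T=\Omega^T O$. From this, $P_\Omega(C_O(N)) = \Omega O^T N O\,\Omega^T = O^T P_\Omega(N)\,O = C_O(P_\Omega(N))$, so $C_O$ commutes with $P_\Omega$ and hence with both projectors $P_\pm=\tfrac12(I\pm P_\Omega)$; in particular $P_\pm(\dot\Sigma_2)=O^T P_\pm(\dot\Sigma_1)O$. Applying the same manipulation to $\mathcal M_V(N)=4VNV+\Omega N\Omega$, and using $V_2=O^TV_1O$ together with $[O,\Omega]=0$, yields the intertwiner
\[
  \mathcal M_{V_2}\circ C_O = C_O\circ \mathcal M_{V_1}.
\]

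The one point that needs care — and the main (mild) obstacle — is passing from $\mathcal M$ to its pseudo-inverse: I must show $\mathcal M_{V_2}^{-1}\circ C_O = C_O\circ \mathcal M_{V_1}^{-1}$ as well (the inverse being the pseudo-inverse throughout, per the paper's convention). This holds because $C_O$ is an \emph{isometry} of the Frobenius inner product on $\mathrm{Sym}(2n,\mathbb R)$, since $\langle C_O(M),C_O(N)\rangle=\tr[M^TN]$ using $O^TO=I$, and it preserves the subspace $\mathrm{Sym}(2n,\mathbb R)$ on which $\mathcal M_V$ acts. Conjugation of an operator by such an orthogonal map commutes with the Moore--Penrose pseudo-inverse, so $\mathcal M_{V_2}^{-1}=C_O\,\mathcal M_{V_1}^{-1}\,C_O^{-1}$; equivalently, $C_O$ carries $\ker\mathcal M_{V_1}$ onto $\ker\mathcal M_{V_2}$ and acts orthogonally on the complements, which is exactly the data fixing the pseudo-inverse.

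Finally I would assemble the pieces. Setting $A:=P_\pm(\dot\Sigma_1)$ so that $P_\pm(\dot\Sigma_2)=O^TAO$,
\[
  QFI^{(2)}_{e/o}
  = 2\tr\!\big[(O^TAO)\,\mathcal M_{V_2}^{-1}(O^TAO)\big]
  = 2\tr\!\big[O^TA\,\mathcal M_{V_1}^{-1}(A)\,O\big]
  = 2\tr\!\big[A\,\mathcal M_{V_1}^{-1}(A)\big]
  = QFI^{(1)}_{e/o},
\]
where the middle equality uses the intertwiner for $\mathcal M^{-1}$ and the last uses cyclicity of the trace with $OO^T=I$. Taking the $+$ and $-$ projectors gives the even and odd statements respectively, establishing independence of the even/odd splitting from the choice of even frame.
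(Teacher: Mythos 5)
Your proposal is correct and follows essentially the same route as the paper's own proof: reduce to the conjugation $\dot\Sigma_2 = O^T\dot\Sigma_1 O$, use $[O,\Omega]=0$ to commute the conjugation with $P_\pm$, intertwine $\mathcal M_{V}$ and its pseudo-inverse through the same conjugation, and finish with cyclicity of the trace. Your treatment of the pseudo-inverse step (via the Frobenius isometry and kernel correspondence) is slightly more explicit than the paper's appeal to spectral calculus, but it is the same argument in substance.
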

The above proposition says that to see the splitting, one does not need to fully go to Williamson's frame by undoing the symplectic evolution; rather, we can always keep a gauge part of the orthogonal symplectic intact. The proof of this is deferred to appendix \ref{app1}. The basic intuition is that within the same trajectory of evolution of $V(t)$, every even frame connected to one another via $O(t)$ element is equivalent. 

\subsection{Explicit expressions in Williamson frame}
Based on the above discussions, we see that overall, the even frames connected by orthogonal symplectic matrices, our split is well defined, and each term (even or odd) is itself invariant. We can use this to evaluate these contributions explicitly. $\dot\Sigma\in Sym(2n,R)$ means we can break it into even and odd components explicitly as:
\begin{equation}\label{thanos2}
    \dot\Sigma=\begin{pmatrix} M& -N\\
N& M\end{pmatrix}+\begin{pmatrix} A & B\\
B& -A\end{pmatrix},
\end{equation}
where $M=M^T, N=-N^T, A=A^T, B=B^T$. Here, the first matrix is always even and the other odd. This is the decomposition of $\dot\Sigma$ into even and odd terms. One can explicitly check that the first part of the above decomposition commutes with the symplectic form $\Omega$ and the second part anti-commutes. Further, it exhausts all possible symmetric matrices over our desired space. Under this decomposition, we have the following theorem:
\begin{theorem}\label{will decom}
    Let $(k_1,\ldots,k_n)$ be the symplectic eigenvalues. Define $\alpha^{\pm}_{ij}=4k_ik_j\pm1$ then:
    \begin{equation}
    \begin{split}
QFI_e(t)&=4\sum_{ij}\frac{M^2_{ij}+N^2_{ij}}{\alpha^{-}_{ij}},\\
QFI_o(t)&=4\sum_{ij}\frac{A^2_{ij}+B^2_{ij}}{\alpha^{+}_{ij}}.
\end{split}
    \end{equation}
\end{theorem}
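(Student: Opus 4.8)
The plan is to reduce the theorem to a direct computation of how the superoperator $\mathcal{M}_K=4\mathcal{L}_K+\mathcal{L}_\Omega$ acts on the two blocks of $\dot\Sigma$ displayed in \eqref{thanos2}. Writing the Williamson-diagonal covariance as $K=\mathrm{diag}(\kappa,\kappa)$ with $\kappa=\mathrm{diag}(k_1,\dots,k_n)$, I would first record that $\mathcal{M}_K(X)=4KXK+\Omega X\Omega$ for symmetric $X$, and compute the block form of $\Omega X\Omega$,
\begin{equation}
\Omega\begin{pmatrix} X_{11} & X_{12}\\ X_{21} & X_{22}\end{pmatrix}\Omega=\begin{pmatrix} -X_{22} & X_{21}\\ X_{12} & -X_{11}\end{pmatrix}.
\end{equation}
The whole point is that this map, together with conjugation by the even matrix $K$, preserves the two-block parity structure, so $\mathcal{M}_K$ is block-diagonal with respect to the even/odd grading already exploited in \thrm{x3}.

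Next I would feed in the two ansätze from \eqref{thanos2}, writing $\dot\Sigma_e:=P_+(\dot\Sigma)$ and $\dot\Sigma_o:=P_-(\dot\Sigma)$. For the even block a short computation gives
\begin{equation}
\mathcal{M}_K\!\begin{pmatrix} M & -N\\ N & M\end{pmatrix}=\begin{pmatrix} 4\kappa M\kappa-M & -(4\kappa N\kappa-N)\\ 4\kappa N\kappa-N & 4\kappa M\kappa-M\end{pmatrix},
\end{equation}
which is again of the even form; since $\kappa$ is diagonal the action is entrywise, rescaling the data $(M,N)$ by $\alpha^-_{ij}=4k_ik_j-1$. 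Identically, the odd block maps to an odd block with $(A,B)$ rescaled by $\alpha^+_{ij}=4k_ik_j+1$. Thus $\mathcal{M}_K$ is literally diagonal in the entrywise basis, with eigenvalue $\alpha^-_{ij}$ on the even $(i,j)$ modes and $\alpha^+_{ij}$ on the odd ones, so the pseudo-inverse simply divides by $\alpha^\mp_{ij}$.

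With this in hand I would substitute back into the defining traces. On the even sector one must note that $\alpha^-_{ij}$ vanishes precisely when $k_i=k_j=1/2$, i.e.\ along pure directions, where the pseudo-inverse annihilates the corresponding mode; this is consistent with \corr{xx1d}. Using the block product together with the symmetry of $M,A,B$ and the antisymmetry of $N$, the double trace collapses:
\begin{equation}
\tr\!\left[\dot\Sigma_e\,\mathcal{M}_K^{-1}\dot\Sigma_e\right]=2\sum_{ij}\frac{M_{ij}^2+N_{ij}^2}{\alpha^-_{ij}},
\end{equation}
and analogously $\tr[\dot\Sigma_o\,\mathcal{M}_K^{-1}\dot\Sigma_o]=2\sum_{ij}(A_{ij}^2+B_{ij}^2)/\alpha^+_{ij}$. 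Multiplying by the factor $2$ in the definition of $QFI_e,QFI_o$ yields exactly the claimed $4\sum(\cdots)$ formulas.

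The main obstacle is bookkeeping rather than conceptual. One must verify carefully that $\mathcal{M}_K$ never couples $M$-type to $N$-type data (and $A$ to $B$), so the entrywise diagonalization is exact; that the even and odd sectors are mutually orthogonal under the trace pairing $\tr[XY]$, which follows from $P_\Omega$ being a self-adjoint involution because $\Omega^2=-I$ is central, so no cross term survives; and that the pseudo-inverse is handled correctly on the kernel $\{\alpha^-_{ij}=0\}$, where the stated sum is understood to omit those modes and \corr{xx1d} guarantees consistency with the pure-state limit. Once these parity-preservation and kernel points are secured, the trace identities are immediate from the (anti)symmetry of the blocks.
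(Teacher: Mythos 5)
Your proposal is correct and follows essentially the same route as the paper's own proof: work in the Williamson frame, compute $\mathcal{M}_K$ block-by-block on the even/odd ansätze of Eq.~\eqref{thanos2} to see it acts entrywise with eigenvalues $\alpha^\mp_{ij}$, invert entrywise (with the pseudo-inverse handling the $\alpha^-_{ij}=0$ kernel), and evaluate the traces using the symmetry of $M,A,B$ and antisymmetry of $N$. Your added care about parity preservation and the pure-state kernel matches the paper's remarks verbatim in substance.
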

It is important to stress that everything has time-dependence, where the symplectic values and the decomposition changes with time, so this equation is evaluated at time $t$. More precisely, given the evolution $V(t)$, we extract the $K(t)=diag(k_1(t),\ldots,k_n(t))$ information, which is just Williamson's decomposition. Following this we define $\alpha^{\pm}_{ij}(t)$ and the matrices $M(t),N(t),A(t),B(t)$ comes from decomposing the velocity term $\dot\Sigma(t)$. The proof is provided in appendix \ref{app1}. The key simplification of the Williamson frame is that, because $V$ is diagonal in mode indices, the superoperator $M_V = 4L_V + L_\Omega$
acts \emph{diagonally} on matrix elements when expressed in a suitable
orthonormal basis of symmetric or anti-symmetric matrices. More concretely, for each
pair of mode indices $(i,j)$ one can choose basis elements supported
only on the $(i,j)$ and $(j,i)$ entries in $M,N,A,B$, and in this basis
$\mathcal M_V$ acts by simple multiplication with scalar factors
\[
\alpha^-_{ij} = 4k_i k_j - 1\text{(even sector)},\quad
\alpha^+_{ij} = 4k_i k_j + 1\text{(odd sector)}.
\]
Thus, on each $(i,j)$ block, $M_V$ is already diagonal, and its
pseudo-inverse $M_V^{-1}$ is obtained by dividing the corresponding
components of $M,N,A,B$ by $\alpha^\mp_{ij}$. A closer look at the above theorem \ref{will decom} shows that the $QFI_e(t)$ or $QFI_o(t)$ are just appropriately weighted Frobenius norm of even tangent vector $\dot\Sigma_e(t)= P_+(\dot\Sigma)$ and the odd tangent vector $\dot\Sigma_o(t)= P_-(\dot\Sigma)$ respectively. The weightage is exactly the inverse of the $\alpha^\mp_{ij}$ factor, which captures the noise term from the state traveling into mixed states. The $QFI_e(t)$ blows up whenever the state approaches pure states, even if $QFI_o(t)$ is finite. This discontinuity is the same as that discussed in \cite{Safranek:2016blj}. If the dynamics are restricted to the pure state manifold, then $QFI_e(t)$ vanishes because of the use of pseudo-inverse, and the contribution only comes from the odd term, which agrees with our corollary \ref{xx1d}. The above idea can easily be extended to multi-parameter estimation:
\begin{corollary}\label{loss2}
    If we are estimating $\vec\theta=(\theta_1,\ldots,\theta_n)$ then, every velocity term $\dot \Sigma_a=\frac{\partial \Sigma}{\partial \theta_a} $ can be appropriately decomposed and we have
    \begin{equation}
        \begin{split}
(QFI_e)_{ab}&=4\sum_{ij}\frac{M^a_{ij}M^b_{ij}+N^a_{ij}N^b_{ij}}{\alpha^{-}_{ij}},\\
(QFI_o)_{ab}&=4\sum_{ij}\frac{A^a_{ij}A^b_{ij}+B^a_{ij}B^b_{ij}}{\alpha^{+}_{ij}},
        \end{split}
    \end{equation}
    with the sum being the entire QFI matrix.
\end{corollary}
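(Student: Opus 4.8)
The plan is to recognize that the multi-parameter formula is nothing more than the \emph{polarization} of the single-parameter result already established in Theorem~\ref{will decom}. The single-parameter even and odd QFI are quadratic forms in the tangent vector $\dot\Sigma$, namely $QFI_e=2\tr[P_+(\dot\Sigma)\mathcal M_{K}^{-1}P_+(\dot\Sigma)]$ and $QFI_o=2\tr[P_-(\dot\Sigma)\mathcal M_{K}^{-1}P_-(\dot\Sigma)]$. Since the Bures metric is a genuine Riemannian metric on the Gaussian state manifold, the associated QFI matrix is precisely the symmetric bilinear form whose diagonal reproduces these quadratic forms. I would therefore \emph{define} the even and odd matrix sectors by the polarized expressions
\begin{equation}
(QFI_e)_{ab}=2\tr[P_+(\dot\Sigma_a)\mathcal M_{K}^{-1}P_+(\dot\Sigma_b)],\qquad
(QFI_o)_{ab}=2\tr[P_-(\dot\Sigma_a)\mathcal M_{K}^{-1}P_-(\dot\Sigma_b)],
\end{equation}
and the task is to evaluate these in the Williamson frame. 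A first point to secure is that at a fixed parameter value $\vec\theta$ the base covariance $V(\vec\theta)$ is a single fixed matrix, so its Williamson decomposition $V=SKS^T$ furnishes one common frame $S(t)$, one set of symplectic eigenvalues $k_i$, and hence one set of factors $\alpha^\pm_{ij}=4k_ik_j\pm1$; every partial derivative $\dot\Sigma_a=S^{-1}(\partial_a V)S^{-T}$ then lives in the same tangent space and is decomposed into its own blocks $M^a,N^a,A^a,B^a$ exactly as in \eqref{thanos2}.

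Next I would invoke the structural fact isolated in the proof of Theorem~\ref{will decom}: in the Williamson frame, $\mathcal M_K=4\mathcal L_K+\mathcal L_\Omega$ acts \emph{diagonally} on the $(i,j)$-block components of a symmetric matrix, multiplying the even-sector components by $\alpha^-_{ij}$ and the odd-sector components by $\alpha^+_{ij}$, so that $\mathcal M_K^{-1}$ simply divides by $\alpha^\mp_{ij}$. Because $\mathcal M_K^{-1}$ is multiplication by scalars in this block basis, the trace pairing of two even tangent vectors factorizes component-wise: the product $\tr[P_+(\dot\Sigma_a)\mathcal M_K^{-1}P_+(\dot\Sigma_b)]$ is obtained from the single-parameter computation simply by replacing each square $M_{ij}^2$ with the product $M^a_{ij}M^b_{ij}$ and each $N_{ij}^2$ with $N^a_{ij}N^b_{ij}$, weighted by $1/\alpha^-_{ij}$. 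This reproduces the stated $(QFI_e)_{ab}$, and the identical argument on the odd sector, with blocks $A,B$ and weight $1/\alpha^+_{ij}$, gives $(QFI_o)_{ab}$.

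To complete the proof I would establish the \emph{additive decoupling}, i.e.\ that the total QFI matrix is exactly the sum $(QFI_e)_{ab}+(QFI_o)_{ab}$ with no even–odd cross terms. The key observation is that the parity involution $P_\Omega$ is self-adjoint with respect to the trace form $\langle X,Y\rangle=\tr[XY]$ on symmetric matrices, since $\tr[\Omega X\Omega^T Y]=\tr[X\,\Omega^T Y\Omega]=\tr[X\,P_\Omega(Y)]$ using $\Omega^T=-\Omega$ and $\Omega^2=-I$. Hence the even and odd eigenspaces are orthogonal and $P_\pm$ are orthogonal projectors. Since $\mathcal M_K$ is built only from $K$ and $\Omega$ it commutes with $P_\Omega$ (as already used in Theorem~\ref{x3}), so $\mathcal M_K^{-1}$ preserves the grading; consequently $\mathcal M_K^{-1}P_-(\dot\Sigma_b)$ remains odd and its trace pairing with the even vector $P_+(\dot\Sigma_a)$ vanishes. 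Writing $\dot\Sigma_a=P_+(\dot\Sigma_a)+P_-(\dot\Sigma_a)$, the full bilinear form $2\tr[\dot\Sigma_a\mathcal M_K^{-1}\dot\Sigma_b]$ collapses to the sum of the two sector contributions.

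The step I expect to require the most care is the consistency of the decomposition across parameters rather than any single computation: one must verify that a \emph{single} Williamson frame is used for all directions and that the resulting matrix elements are independent of the residual $O_{sp}$ gauge. The former holds because all $\dot\Sigma_a$ are derivatives of the same fixed covariance $V(\vec\theta)$; the latter is exactly Proposition~\ref{prop:frame-invariance}, applied not to a single direction but to the bilinear form, which guarantees that the even and odd matrix sectors are well-defined invariants of the even-frame choice. Once these two points are in place, the result follows by polarization as above, with the stated formulas being the direct matrix generalization of Theorem~\ref{will decom}.
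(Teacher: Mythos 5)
Your proposal is correct and follows essentially the same route as the paper, which treats this corollary as an immediate polarization of Theorem~\ref{will decom}: one fixed Williamson frame yields a common block basis in which $\mathcal M_K^{-1}$ acts by division by $\alpha^{\mp}_{ij}$, the squares $M_{ij}^2, N_{ij}^2, A_{ij}^2, B_{ij}^2$ become products across parameter indices, and the even--odd cross terms vanish by the same parity-orthogonality argument used in Theorem~\ref{x3}. Your added checks (self-adjointness of $P_\Omega$ under the trace pairing and frame invariance via Proposition~\ref{prop:frame-invariance}) are consistent with, and make explicit, what the paper leaves implicit.
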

The split of even and odd parts also brings interesting, straightforward consequences for thermometry \cite{cenni,Mehboudi_2019}. We call a parameter $t$ thermometric if it brings about a change only in the symplectic eigenvalues. This automatically means that variations along the odd split die off. Then the equation above becomes $P=\dot K$, and all other matrices are 0 in this special case of thermometric parameter, which gives:
\begin{corollary}
    For any thermometric parameter $t$ we have:
    \begin{equation}
        QFI(t)=QFI_e(t)=4\sum_i\frac{\dot k^2_i}{4k^2_i-1},
    \end{equation}
\end{corollary}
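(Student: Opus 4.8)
The plan is to specialize \thrm{will decom} to the thermometric situation, where the only moving datum is the spectrum. First I would fix a Williamson decomposition $V(t)=S\,K(t)\,S^T$ and use the defining property of a thermometric parameter—that it changes only the symplectic eigenvalues—to take the symplectic frame $S$ independent of $t$. Differentiating with $S$ constant gives $\dot V=S\,\dot K\,S^T$, so the frame-transported velocity appearing in the even/odd definition is
$\dot\Sigma=S^{-1}\dot V S^{-T}=\dot K$, which in the ordering $(q_1,\dots,q_n,p_1,\dots,p_n)$ is the block-diagonal matrix $\mathrm{diag}(\dot K_0,\dot K_0)$ with $\dot K_0=\mathrm{diag}(\dot k_1,\dots,\dot k_n)$.

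Second, I would feed $\dot\Sigma=\dot K$ into the even/odd decomposition of Eq.~\eqref{thanos2}. A direct computation shows $\Omega\,\dot\Sigma\,\Omega^T=\dot\Sigma$, i.e.\ $\dot\Sigma$ is purely even, so $P_-(\dot\Sigma)=0$ and the odd sector is empty. Matching $\mathrm{diag}(\dot K_0,\dot K_0)$ against the even block template of Eq.~\eqref{thanos2} then forces $M=\dot K_0$ and $N=0$ (with $A=B=0$). In particular $QFI_o(t)=0$, so the total QFI is exhausted by the even sector, consistent with the assertion that thermometric parameters contribute purely to the even part.

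Third, I would substitute $M=\dot K_0$ (diagonal) and $N=0$ into the even formula $QFI_e=4\sum_{ij}(M_{ij}^2+N_{ij}^2)/\alpha^-_{ij}$ from \thrm{will decom}. Since $M$ is diagonal, only the $i=j$ terms survive, with $M_{ii}=\dot k_i$ and $\alpha^-_{ii}=4k_i^2-1$, yielding $QFI(t)=QFI_e(t)=4\sum_i \dot k_i^2/(4k_i^2-1)$, as claimed.

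The only real subtlety—and the step I expect to require the most care—is the first one: justifying that a thermometric parameter admits a Williamson decomposition with time-independent frame. The Williamson frame is unique only up to the stabilizer gauge $\mathcal F_n^\mu$ of phase shifters inside degenerate eigenspaces, so one must argue either that a constant-frame representative exists, or, more robustly, that $\dot\Sigma$ lies entirely in the even (spectral) direction regardless of the gauge chosen. Here I would invoke \pro{frame-invariance}: the even/odd split is invariant under the $O_{sp}$-valued frame changes that generate this ambiguity, so the result is insensitive to the stabilizer freedom and the computation above goes through verbatim in any admissible even frame.
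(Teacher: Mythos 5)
Your proposal is correct and takes essentially the same route as the paper: the paper's own (very terse) argument likewise observes that for a thermometric parameter the transported velocity is $\dot\Sigma=\dot K$, so that in the decomposition of Eq.~\eqref{thanos2} one has $M=\dot K$, $N=A=B=0$, the odd sector vanishes, and the diagonal sum in Theorem~\ref{will decom} gives the stated formula. Your additional care about the Williamson-frame gauge freedom, resolved via Proposition~\ref{prop:frame-invariance}, is sound and merely makes explicit a point the paper leaves implicit.
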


This relation is of a similar form to that obtained in \cite{cenni}. For single mode, if the state is at thermal equilibrium, then $k(T)=\frac{1}{2}\coth{\frac{\omega}{2T}}$ where $\omega$ is the energy of the Bosonic mode and $T$ is the temperature. Plugging this into the above equation gives:
\begin{equation}
    QFI(T)=QFI_e(T)=\frac{\omega^2}{4T^2\sinh^2(\frac{\omega}{2T})},
\end{equation}
which matches the standard expression in \cite{Correa_2015}. Related to this, we can also give a state-dependent lower bound on $QFI_e(t)$ in terms of the rate of change of purity of the state. Let purity be defined as $\mu=\frac{1}{\sqrt{det(V)}}$ (\cite{Serafini2017QuantumCV} upto some rescaling which we ignore) then:
\begin{theorem}\label{thanos3}
    Given variation of covariance matrix $V(t)$ let purity be $\mu(t)$ then,
    \be
    QFI_e(t)\geq \frac{8}{8n-||V(t)^{-1}||^2}\{\frac{d(\ln{\mu(t)})}{dt}\}^2.
    \ee
\end{theorem}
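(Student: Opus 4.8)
The plan is to work entirely in the Williamson frame, where $V=SKS^T$ with $K=\mathrm{diag}(K_0,K_0)$, $K_0=\mathrm{diag}(k_1,\dots,k_n)$, and to exploit the fact that, by the definition of $QFI_e$, the quantity $QFI_e=2\langle\dot\Sigma_e,\dot\Sigma_e\rangle$ is a weighted squared norm of the even tangent vector $\dot\Sigma_e=P_+(\dot\Sigma)$ with respect to the positive form $\langle X,Y\rangle:=\tr[X\mathcal M_K^{-1}Y]$ on the even sector. The strategy is to recast the purity-change rate as a single linear functional of the \emph{same} $\dot\Sigma_e$, after which the theorem becomes a one-line Cauchy--Schwarz estimate in this inner product.

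First I would differentiate $\ln\mu=-\tfrac12\ln\det V$ by Jacobi's formula to obtain $\tfrac{d\ln\mu}{dt}=-\tfrac12\tr[V^{-1}\dot V]$, and push this into the Williamson frame: since $V^{-1}=S^{-T}K^{-1}S^{-1}$ and $\dot\Sigma=S^{-1}\dot V S^{-T}$, cyclicity gives $\tr[V^{-1}\dot V]=\tr[K^{-1}\dot\Sigma]$. The key structural observation is that $K^{-1}$ is an \emph{even} matrix ($[K^{-1},\Omega]=0$), so by parity orthogonality ($\tr[XY]=0$ whenever $X$ is even and $Y$ odd) only the even part survives: $\tfrac{d\ln\mu}{dt}=-\tfrac12\tr[K^{-1}\dot\Sigma_e]$. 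This is exactly the functional I want. Using the self-adjointness of $\mathcal M_K^{-1}$ under the trace pairing, I would rewrite it as $\tr[K^{-1}\dot\Sigma_e]=\langle G,\dot\Sigma_e\rangle$ with the fixed even test vector $G:=\mathcal M_K(K^{-1})=4K-K^{-1}$, where I have used $\Omega K^{-1}\Omega=-K^{-1}$.

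The bound then follows from Cauchy--Schwarz, $\langle G,\dot\Sigma_e\rangle^2\le\langle G,G\rangle\,\langle\dot\Sigma_e,\dot\Sigma_e\rangle=\langle G,G\rangle\,QFI_e/2$, once the normalization constant is evaluated: $\langle G,G\rangle=\tr[G\mathcal M_K^{-1}G]=\tr[GK^{-1}]=\tr[(4K-K^{-1})K^{-1}]=8n-\tr[K^{-2}]$. Identifying $\|V^{-1}\|^2$ with the symplectic (Williamson-frame) Frobenius norm $\tr[K^{-2}]=2\sum_i k_i^{-2}$, and combining $\big(\tfrac{d\ln\mu}{dt}\big)^2=\tfrac14\langle G,\dot\Sigma_e\rangle^2$ with the above, yields precisely $QFI_e\ge\frac{8}{8n-\|V^{-1}\|^2}\big(\tfrac{d\ln\mu}{dt}\big)^2$. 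Note that the operator Cauchy--Schwarz uses the full $\dot\Sigma_e$, so no off-diagonal information is discarded by hand; the slack in the inequality is saturated exactly when $\dot\Sigma_e\propto K^{-1}$ is diagonal.

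I expect the main obstacle to be bookkeeping at two points. The first is the precise meaning of $\|V^{-1}\|^2$: the naive Frobenius norm $\tr[(V^{-1})^2]$ is \emph{not} symplectically invariant and in squeezed frames strictly exceeds $\tr[K^{-2}]$ (e.g.\ $\tr[(V^{-1})^2]=4(e^{4r}+e^{-4r})\ge 8$ for a single squeezed mode), which would tilt the inequality the wrong way; the statement is correct only for the symplectic-spectral norm $\tr[K^{-2}]=2\sum_i k_i^{-2}$, so this identification must be established carefully. The second is the role of the pseudoinverse: $\mathcal M_K$ is only positive semidefinite on the even sector (its eigenvalues are $\alpha^-_{ij}=4k_ik_j-1$, which vanish as $k_i\to\tfrac12$), so the Cauchy--Schwarz step and the manipulation $\mathcal M_K^{-1}\mathcal M_K(K^{-1})=K^{-1}$ are rigorously valid for genuinely mixed states $k_i>\tfrac12$; as the state approaches purity the denominator $8n-\|V^{-1}\|^2\to0$, reproducing the expected divergence of $QFI_e$ and matching the discontinuity noted after Theorem~\ref{will decom}.
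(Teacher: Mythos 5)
Your proof is correct and reaches the paper's bound by a route that, while sharing the same Cauchy--Schwarz core, is organized quite differently. The paper works directly from the explicit formula of Theorem~\ref{will decom}: it discards the off-diagonal and $N$ contributions to keep $\sum_i M_{ii}^2/(4k_i^2-1)$, uses the (unproven there) first-order Williamson fact $M_{ii}=\dot k_i$, and then applies an ordinary vector Cauchy--Schwarz with $v_i=\dot k_i/\sqrt{4k_i^2-1}$ and $a_i=\sqrt{4k_i^2-1}/k_i$, recognizing $\sum_i \dot k_i/k_i=-d(\ln\mu)/dt$ and $\|a\|^2=(8n-\|V^{-1}\|^2)/2$. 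You instead realize the purity rate as a linear functional of the \emph{full} even tangent vector---Jacobi's formula gives $d(\ln\mu)/dt=-\tfrac12\tr[K^{-1}\dot\Sigma]$, and parity orthogonality kills the odd part---and then perform a single Cauchy--Schwarz in the weighted inner product $\langle X,Y\rangle=\tr[X\mathcal M_K^{-1}(Y)]$ against the dual test vector $G=\mathcal M_K(K^{-1})=4K-K^{-1}$, with $\langle G,G\rangle=8n-\tr[K^{-2}]$. Your version buys three things: it bypasses the unproved identification $M_{ii}=\dot k_i$, replacing it with elementary trace identities; it makes explicit that nothing is thrown away before the final inequality, together with an exact saturation direction; and it pins down the only reading of $\|V^{-1}\|^2$ under which the theorem is true, namely the spectral quantity $\tr[K^{-2}]=2\sum_i k_i^{-2}$---the paper calls it a Hilbert--Schmidt norm, which is frame-dependent and, as your squeezed-mode example shows, would tilt the stated bound the wrong way in a generic frame, while the paper's own proof silently uses the spectral value. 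What the paper's route buys is brevity: it is a two-line corollary of Theorem~\ref{will decom} with no duality or pseudo-inverse bookkeeping. Two minor corrections to your write-up: the Cauchy--Schwarz equality case is $\dot\Sigma_e\propto G=4K-K^{-1}$, not $\dot\Sigma_e\propto K^{-1}$ (the two coincide only when all $k_i$ are equal); and your restriction to $k_i>\tfrac12$ for the step $\mathcal M_K^{-1}\mathcal M_K(K^{-1})=K^{-1}$ is exactly the right caveat---the paper's proof needs the same restriction when it divides by $4k_i^2-1$.
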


The norm used above is the Hilbert–Schmidt norm. So, given a fixed amount of even QFI ( or equivalently, total QFI), there is a bound on how fast global mixedness can change with the parameter. This is a Gaussian, even-sector analogue of \textit{quantum speed limits}: the Bures metric controls how fast certain scalar functionals (here, purity) can change along a curve in state space. In the Gaussian setting, the even QFI hence quantifies not only the sensitivity of populations, but also acts as a speed limit for how fast a parameter can change the global mixedness (purity) of the state. Conversely, measuring how rapidly the purity changes with the parameter provides a simple experimentally accessible lower bound on the even-sector QFI. The bound also has an interesting consequence in relation to how $QFI_e(t)$ diverges as our state leaves the pure state manifold $(t=0)$. For this consider a linear expansion of symplectic values $k_i(t)=\frac{1}{2}+\epsilon t+\mathcal O(t^2)$, then the first lower bound diverges as $t^{-1}$ and hence, in these places $QFI_e(t)$ blows up. As one approaches pure states, the metric in the even directions becomes very \textit{stiff} (since spectral deformations are heavily constrained by uncertainty), and the bound reflects that stiffness. 
\section{\label{sec:level5}Examples \protect}
Here, we discuss several examples of how the split is helpful or provides useful information related to the sensing parameters. We start with a very simple example of sensing the beam-splitter angle to show the application of how QFI can be deduced from graphs. We start with two modes, one squeezed and the other anti-squeezed send them via a beam-splitter
\be
\rho(r,t)=\hat{U}_{BS}(t/2)\hat S_1(r)\otimes \hat S_2(-r)[\ket{0}\bra{0}].
\ee
Here, the notation $\hat S(r)$ is the squeezing unitary operator with the symplectic representation as $S(r)=diag(e^r,e^{-r})$. The beam-splitter unitary is denoted by $\hat U_{BS}(t)$ whose symplectic can be defined as $S_{BS}(t/2)= e^{t\frac{G_B}{2}}$ where $G_B$ generates the symplectic of beam-splitter and is given by:
\begin{equation}
    \begin{split}
        G_B&=\begin{pmatrix} R& 0\\
0& R\end{pmatrix},
    \end{split}
\end{equation}
where $R=\begin{pmatrix} 0& 1\\
-1& 0\end{pmatrix}$. As per the figure below, we find the output graph as
\begin{figure}[h]
\begin{centering}
\includegraphics[scale=0.5]{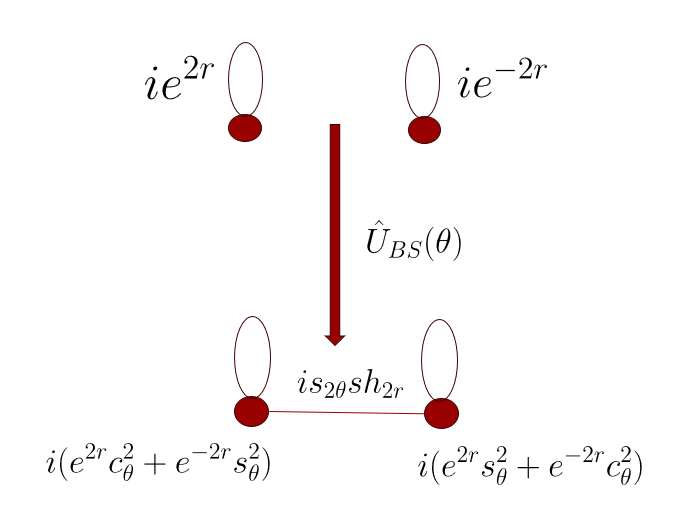}
\caption{\label{fig:epsart} Graphical representation of how the beam-splitter changes and distributes correlations. Here $c_\theta=cos(\theta)$ and $sh(r)=sinh(r)$.}
\end{centering}
\end{figure}
\be
Z=i\begin{pmatrix}
    e^{2r}c^2_{t/2}+e^{-2r}s^2_{t/2} & s_tsh_{2r}\\
    s_tsh_{2r} & e^{-2r}c^2_{t/2}+e^{2r}s^2_{t/2},
\end{pmatrix}
\ee
and this gives the derivative as:
\be
\frac{dZ}{dt}=i sh_{2r}\begin{pmatrix}
    -s_{t} & c_{t}\\
    c_t & s_{t}
\end{pmatrix}.
\ee
Now, using the equation for QFI in terms of graphical parameters, we get
\be
QFI_o(t)=QFI(t)=\frac{1}{2}\tr[Y^{-1}\frac{d(Z)}{dt}Y^{-1}\frac{dZ^*}{dt}]=sh^2_{2r}.
\ee
Here, $Z=iY$ and the entire graph is imaginary. 
For $t=0$, the $Z=i\times diag(e^{2r},e^{-2r})$ is just two uncorrelated squeezed vacuum states. Similarly, for $t=\pi/2$, we get the two-mode squeezed vacuum (TMSV). As sensing is over pure states, our even contribution vanishes, and the entire QFI is the odd contribution. This goes to $0$ as $r$ goes to $0$, otherwise we get an exponential metrological gain $QFI\approx \frac{1}{4}e^{4r}$.

\subsection{Sensing temperature difference}
Let us do an explicit 2-mode example: We will start with thermal states in both modes such that the initial covariance matrix is $diag(K,K)$ with $K=diag(v_1,v_2)$. Then we apply a squeezing operation on the first mode and an anti-squeezing on the second, followed by a beam-splitter operation with time-dependent angle $t$:
\begin{equation}
    \rho(t,r)=\hat U_{BS}(t/2)\hat S_1(r)\otimes \hat S_2(-r)[\rho_{th}].
\end{equation}

The idea is to sense $t$ with thermal states as a probe instead of the vacuum. After one mode squeezing operations, let $V_{th}\to V(r)$ (where $V_{th}$ is the covariance matrix of $\rho_{th}$), then we perform the beam-splitter operation to get:
\begin{equation}
    V(r,t)=S_{BS}(t/2)V(r)S_{BS}^T(t/2),
\end{equation}
the derivative $\dot V|_{t=0}(r)$ is given by:
\begin{equation}
   \dot V|_{t=0}(r)=\frac{1}{2}[G_B,V(r)],
\end{equation}
where we have expanded the exponential from $S_{BS}(t/2)= e^{t\frac{G_B}{2}} $ and used $G_B^T=-G_B$. But we need to push it back to the Williamson frame, which at $t=0$ is just the back action of $S_1(r)\oplus S_2(-r)$. Overall, it means that we need the even-odd split for 
\begin{equation}
    \dot\Sigma|_{t=0}=[S_1(-r)\oplus S_2(r)](\dot V|_{t=0})[S_1(-r)\oplus S_2(r)]
\end{equation}
Performing the multiplications and decomposition according to Eq. \eqref{thanos2} gives: $M=\frac{(v_2-v_1)\cosh(2r)}{2}\sigma_x, N=0, A=\frac{-(v_2+v_1)\sinh(2r)}{2}\sigma_x, B=0$
and it results to:
\begin{equation}
    \begin{split}
        QFI_e(v_1,v_2,r)&=\frac{2\cosh^2(2r)(v_2-v_1)^2}{4v_1v_2-1},\\
        QFI_o(v_1,v_2,r)&=\frac{2\sinh^2(2r)(v_2+v_1)^2}{4v_1v_2+1}.
    \end{split}
\end{equation}
Based on the above equations, we can draw several physical interpretations of the splitting:

\paragraph*{(i) Thermal-contrast interferometry-even sector:}
The even contribution $QFI_e$ quantifies the sensitivity arising from
\emph{population asymmetry} between the two thermal modes. It is proportional
to the squared difference of the symplectic eigenvalues,
\begin{equation}
  QFI_e \propto (v_2 - v_1)^2,
\end{equation}
and vanishes whenever $v_1 = v_2$. In the unsqueezed case $r=0$, we obtain
\begin{equation}
  QFI_e(v_1,v_2,0)
  = \frac{2(v_2 - v_1)^2}{4v_1 v_2 - 1},\qquad
  QFI_o(v_1,v_2,0) = 0,
\end{equation}
so a bare beam splitter on two thermal inputs is sensitive to the mixing
angle $t$ \emph{only} if there is a temperature (population) difference
between the modes. The denominator $4v_1 v_2 - 1$ plays the role of a
global-noise penalty: larger thermal occupancies in both modes suppress the information per unit temperature contrast. Thus, in the absence of squeezing, this setup behaves as a \emph{thermal-contrast interferometer}: the parameter $t$ is sensed purely through spectral (even-sector) information carried by $(v_2 - v_1)$, and the odd sector is completely silent.
\paragraph*{(ii) Correlation-based sensing-odd sector:}
The odd contribution $\mathrm{QFI}_o$ quantifies sensitivity arising from
\emph{correlation-generating} dynamics. It scales with the sum,
\begin{equation}
  QFI_o \propto (v_1 + v_2)^2,
\end{equation}
and is activated as soon as the local squeezing is nonzero. In particular,
when $v_1 = v_2 = v$ we find
\begin{equation}
  QFI_e(v,v,r) = 0,\qquad
  QFI_o(v,v,r)
  = 2\sinh^2(2r)\,\frac{(2v)^2}{4v^2 + 1}.
\end{equation}
In this regime, the two modes start at the \emph{same} temperature, so there is no spectral contrast for the beam splitter to exploit. Nevertheless, the
local squeezers $S_1(r)$ and $S_2(-r)$ prepare the modes with opposite
quadrature anisotropies: one mode is squeezed in a given quadrature while
the other is anti-squeezed. The subsequent beam splitter then converts this
anisotropy into genuine intermode correlations (and, in the pure limit,
entanglement). The output covariance depends on $t$ solely through its
correlation structure, while the symplectic eigenvalues remain unchanged.
Accordingly, $QFI_e$ vanishes and the entire metrological power
comes from the odd sector $QFI_o$. The scaling
\begin{equation}
  QFI_o \sim \sinh^2(2r)\,(v_1 + v_2)^2,
\end{equation}
has a clear interpretation: $\sinh^2(2r)$ quantifies the strength of the
correlation-generating (squeezing) resource, while $(v_1 + v_2)^2$ shows
that, even in the equal-temperature case, a larger total photon number in the two modes enhances the correlation-based sensitivity.

\paragraph*{(iii) Limiting cases and consistency.}
Several limits provide useful checks:
\begin{itemize}
  \item \emph{No sensitivity:} if $v_1=v_2$ and $r=0$, then
        $QFI_e = QFI_o = 0$, which reflects the fact that a beam splitter acting on two identical thermal modes generates no
        $t$-dependence in the state.
  \item \emph{Pure limit:} if $v_1=v_2 = \tfrac12$ (both inputs in vacuum),
        then $QFI_e = 0$ and
        \[
          QFI_o \to \sinh^2(2r),
        \]
        in agreement with the pure-graph example and with the general result
        that the even sector vanishes on the pure manifold.
\end{itemize}
\begin{figure}[h]
\begin{centering}
\includegraphics[scale=0.5]{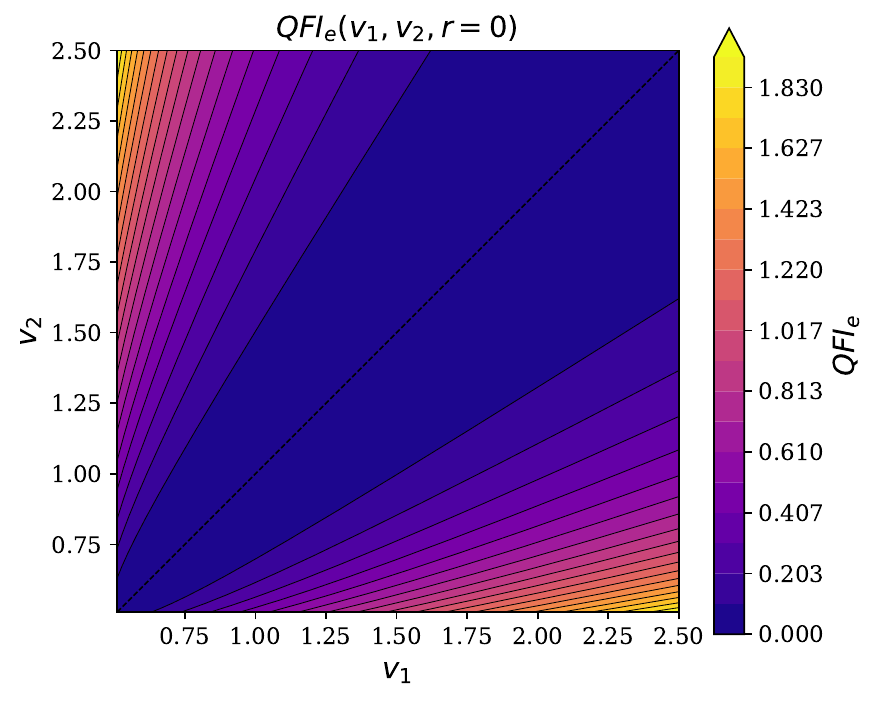}
\label{fig: epsart}
\caption{\justifying{\label{fig:epsart} Plot of the even quantum Fisher information $QFI_e(v_1,v_2,r=0)$.
It is zero when $v_1 = v_2$ and hence can witness a temperature difference between modes.
For very asymmetric mode temperatures (even for small differences), the corners of the plot light up.
}}
\end{centering}
\end{figure}
\begin{figure}[h]
\begin{centering}
\includegraphics[scale=0.5]{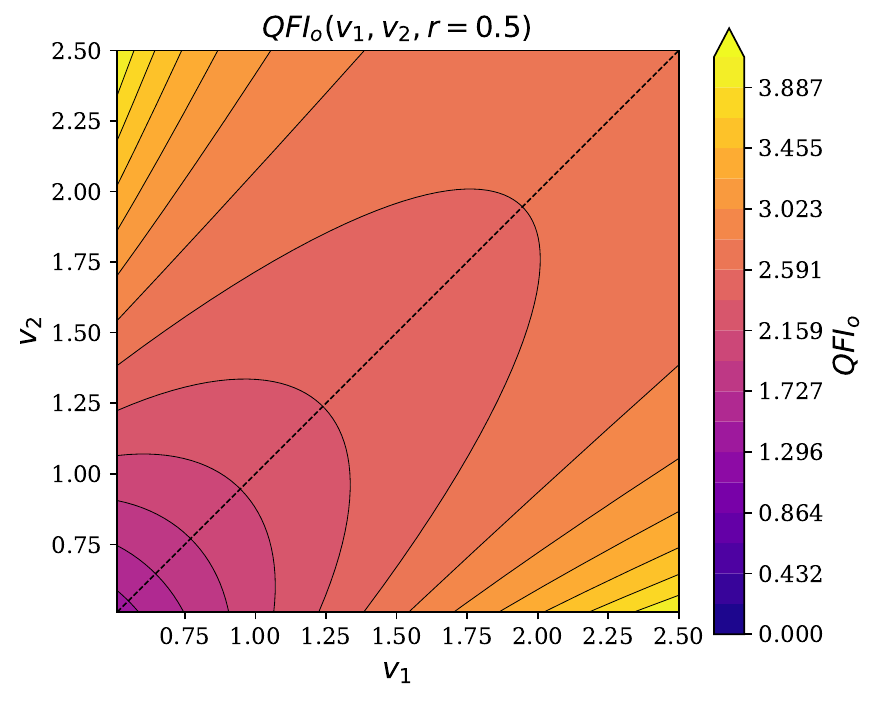}
\caption{\justifying{\label{fig:epsart}. Plot of the odd quantum Fisher information $QFI_o(v_1,v_2,r=0.5)$.
        Unlike its even counterpart, this contribution never vanishes and increases more rapidly with growing asymmetry.
        A large odd component reflects that the parameter is probed along a direction that enhances inter-mode correlations, leading to an increased sensitivity of the state in this sector.}}
\end{centering}
\end{figure}

\subsection{Sensing of transmissivity of loss channel}
We start with a vacuum state $\ket{0}$. Then we pass it through a squeezer followed by a loss channel. The loss channel is a Gaussian completely positive trace-preserving map (CPTP map~\cite{niel}). Such a channel can be represented by noise matrices $(X,Y)$ such that under the channel the covariance matrix transforms as $V\to XVX^T+Y$~\cite{Serafini2017QuantumCV}. For the loss channel (denoted by $\mathcal L_\eta$) we have $X=\sqrt{\eta}I$ and $Y=\frac{(1-\eta)}{2}I$. The entire evolution is given as:
\begin{equation}
  \rho(\eta,r)= \mathcal L_\eta( \hat S_1(r)(\ket{0}\bra{0})\hat S_1(r)^\dagger).
\end{equation}
We now plan to sense $\eta$. For the exact calculation, we start from the calculation of the covariance matrix of $\rho(\eta,r)$
\begin{equation}\label{loss1}
    V(\eta,r)=\begin{pmatrix} \Lambda_1&0 \\
0& \Lambda_2\end{pmatrix},
\end{equation}
where $\Lambda_1=(1-\eta)/2+\eta e^{2r}/2$ and $\Lambda_2=(1-\eta)/2+\eta e^{-2r}/2$. To calculate the even and odd QFI, we need the dynamic frame change symplectic matrix $S_{\eta,r}$ such that $K(\eta,r)=S^{-1}_{\eta,r}V(\eta,r)S^{-T}_{\eta,r}$ gives the Williamson form. The exact form of $S_{\eta,r}$ is given by
\begin{equation}
    S_{\eta,r}=diag \left(\left(\frac{\Lambda_1}{\Lambda_2}\right)^{1/4}, \left(\frac{\Lambda_2}{\Lambda_1}\right)^{1/4}\right),
\end{equation}
and $K(\eta,r)=diag(\sqrt{\Lambda_1\Lambda_2},\sqrt{\Lambda_1\Lambda_2})$. The velocity term in Williamson frame is given by $\dot\Sigma(\eta,r)=S^{-1}_{\eta,r}\dot V S^{-T}_{\eta,r}$ which can be written explicitly in terms of $\Lambda_1$ and $\Lambda_2$ as
\begin{equation}
    \dot\Sigma(\eta,r)=\frac{1}{2}diag((e^{2r}-1)(\frac{\Lambda_2}{\Lambda_1})^{1/2},(e^{-2r}-1)(\frac{\Lambda_1}{\Lambda_2})^{1/2}).
\end{equation}
Using the decomposition into even and odd sectors gives
\begin{equation}\label{loss3}
\begin{split}
    QFI_e&=\frac{4m^2}{4{\Lambda_1\Lambda_2}-1},\\
    QFI_o&=\frac{4a^2}{4{\Lambda_1\Lambda_2}+1},
\end{split}
\end{equation}
where $m=\frac{d_1+d_2}{2}$ and $a=\frac{d_1-d_2}{2}$. Here, we have set $d_1=\dot\Sigma(\eta,r)_{00}$ and $d_2=\dot\Sigma(\eta,r)_{11}$. Similar to the case of temperature sensing, we can give physical interpretations of different regimes and how $QFI_e$ and $QFI_o$ behave in these regimes:

\paragraph*{(i) Highly lossy channel, $\eta\to 0$:} As the transmissivity goes to 0, we always get the vacuum state as output irrespective of our input state. In this regime for moderate squeezing $0\leq r\leq 4$ values, we have the following behavior of the QFI:
\begin{align}
  QFI_e(\eta,r)
  &\simeq
   \frac{\cosh 2r - 1}{2\eta},
   \label{eq:qfi-e-eta0}\\[0.3em]
  QFI_o(\eta,r)
  &\simeq  \frac{\sinh^2 2r}{2}.
\end{align}
Thus, near $\eta=0$, the even sector diverges like \(QFI_e \sim (\cosh 2r -1)/(2\eta)\), while the odd sector tends to a finite value of order $\sinh^2 2r$. Physically, for $\eta\approx 0$ the channel almost completely discards the input and outputs vacuum; nevertheless, \emph{any} small increase in $\eta$ introduces a finite amount of squeezed energy, drastically changing
the symplectic eigenvalues and hence the purity. The even sector captures this \textit{purity sensitivity} and therefore blows up as $1/\eta$. 

\paragraph*{(ii) Near-identity channel, $\eta \to 1$:}

At $\eta = 1$, the channel becomes the identity and the output is a pure
squeezed vacuum. In this regime, for moderate squeezing $0 \leq r \leq 4$, we obtain the following behaviour of the QFI:
\begin{align}
  QFI_e(\eta,r)
  &\simeq \frac{\cosh 2r - 1}{2(1-\eta)}, \\
  QFI_o(\eta,r)
  &\simeq \frac{\sinh^2 2r}{2}.
\end{align}
Thus, the behaviour near $\eta = 1$ is completely analogous to the limit
$\eta \to 0$: the even sector diverges as
\(
  QFI_e \sim (\cosh 2r - 1)/[2(1-\eta)]
\),
while the odd sector remains finite. Physically, at $\eta = 1$, the output is a pure squeezed state. Any small decrease in $\eta$ (small amount of loss) produces a finite first-order change in the symplectic eigenvalue and hence in the purity. The statistical metric along this
\textit{purity-sensitive} direction is singular at the pure-state boundary, so the QFI for the transmissivity diverges. This is purely an even-sector effect: the odd sector detects only squeezing-shape changes at fixed symplectic eigenvalue, which remain smooth at $\eta = 1$.

\paragraph*{(iii) Moderate loss, e.g.\ $\eta \approx 0.5$:}

For intermediate transmissivities $\eta \in (0,1)$ away from the pure boundaries, the symplectic eigenvalue is strictly larger than $1/2$. In this case, there is no divergence, and both $QFI_e(\eta,r)$ and $QFI_o(\eta,r)$ remain finite. The state is strongly mixed for typical
values of $r$, so the purity changes only slowly with $\eta$ and the even-sector QFI is correspondingly suppressed. The structure of the QFI in this regime is conveniently summarized by introducing the odd QFI fraction
\begin{equation}
  F_{\mathrm{odd}}(\eta,r)
  = \frac{QFI_o(\eta,r)}{QFI_e(\eta,r) + QFI_o(\eta,r)},
\end{equation}
and the output purity is
\begin{equation}
  \mu(\eta,r)
  = \frac{1}{2\sqrt{\det V(\eta,r)}},
\end{equation}
where $V(\eta,r)$ is the covariance matrix of the output state.

Figure~\ref{fig:loss-Fodd} shows $F_{\mathrm{odd}}(\eta,r)$ as a
function of the channel transmissivity $\eta$ for a fixed range of
squeezing values $r$, while Fig.~\ref{fig:loss-purity} displays the
corresponding purity $\mu(\eta,r)$. One clearly sees that in regions
where the purity varies weakly with $\eta$, the odd sector
dominates the total QFI ($F_{\mathrm{odd}} \approx 1$), whereas in
regions where the purity is highly sensitive to $\eta$, the even contribution carries most of the total QFI.

\begin{figure}[h]
  \centering
  \includegraphics[scale=0.5]{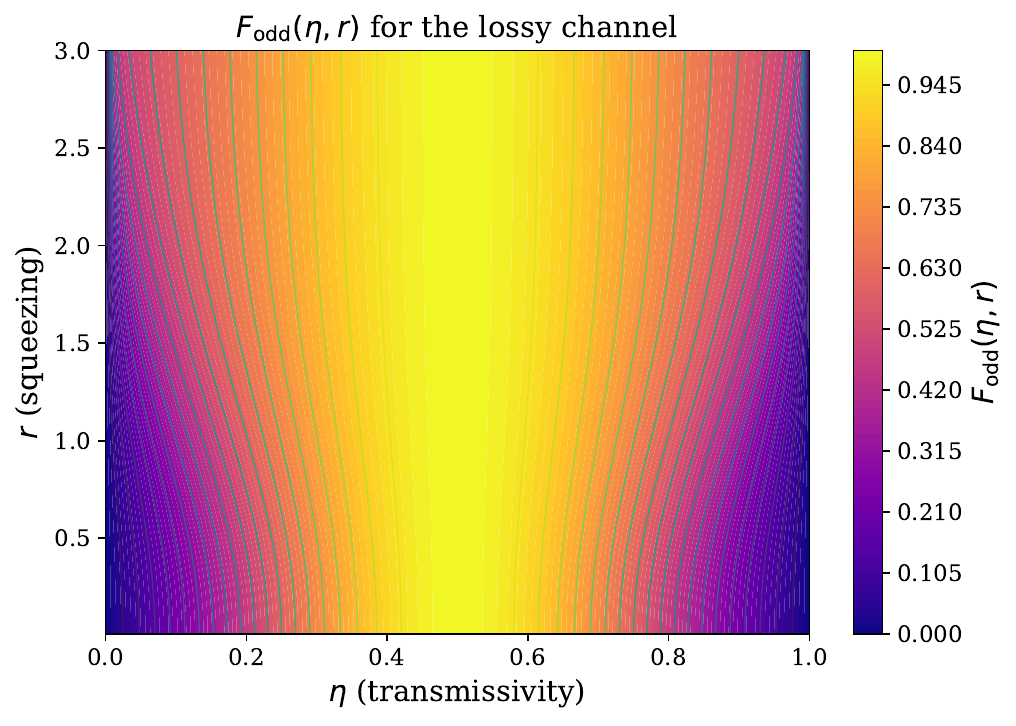}%
  \caption{\justifying{
    Odd QFI fraction
    $F_{\mathrm{odd}}(\eta,r)
      = QFI_o(\eta,r)/[QFI_e(\eta,r)+QFI_o(\eta,r)]$
    for the lossy channel, plotted as a function of the transmissivity $\eta$ (horizontal axis) for a representative
    choice of input squeezing $r$ (see main text for parameters).
    The plot illustrates that at moderate losses the odd contribution
    dominates the total QFI, whereas near the extreme-loss and
    near-identity limits the even sector becomes increasingly important.}
  }
  \label{fig:loss-Fodd}
\end{figure}

\begin{figure}[h]
  \centering
  \includegraphics[scale=0.5]{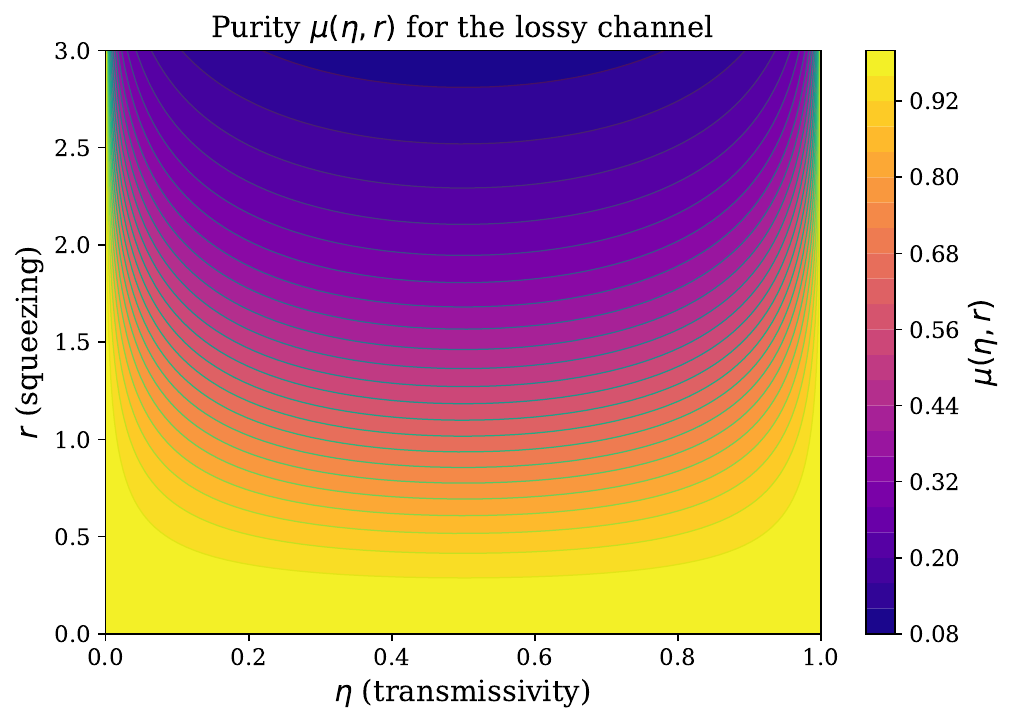}%
  \caption{\justifying{
    Purity of the output state,
    $\mu(\eta,r) = 1/[2\sqrt{\det V(\eta,r)}]$,
    as a function of the transmissivity $\eta$ (horizontal axis)
    for the same range of $r$ as in Fig.~\ref{fig:loss-Fodd}.
    Comparing with Fig.~\ref{fig:loss-Fodd} shows that regions where
    $\mu(\eta,r)$ changes slowly with $\eta$ correspond to a dominant
    odd contribution to the QFI, whereas regions with a rapidly varying
    purity are associated with a large even-sector contribution.
  }}
  \label{fig:loss-purity}
\end{figure}
The same analysis can be repeated for the phase-insensitive
amplification channel, now with
\begin{equation}
  \Lambda_1 = \frac{g - 1}{2} + \frac{g\,e^{2r}}{2},
  \qquad
  \Lambda_2 = \frac{g - 1}{2} + \frac{g\,e^{-2r}}{2},
\end{equation}
where $g$ is the gain and $r$ is the input squeezing. We again consider
the odd QFI fraction $F_{\mathrm{odd}}(g,r)$ and the purity
$\mu(g,r)$, now as functions of the gain $g$. As in the lossy case, the
plots reveal that if the state lies on (or close to) the manifold of pure states, but the rate of change of purity with respect to the sensed parameter is large, then the even contribution dominates the QFI. In contrast, when the purity changes only weakly with the parameter, the odd sector provides the leading contribution.

Moreover, asymmetries in the purity landscape as a function of $g$
are directly reflected in corresponding asymmetries of the odd QFI fraction. This supports the idea that even and odd contributions probe, respectively, purity-changing and
correlation changing directions in parameter space. In the next section, we show how the full quantum Fisher information matrix (QFIM)
can be decomposed into even and odd parts along these two complementary directions.

\begin{figure}[h]
  \centering
  \includegraphics[scale=0.5]{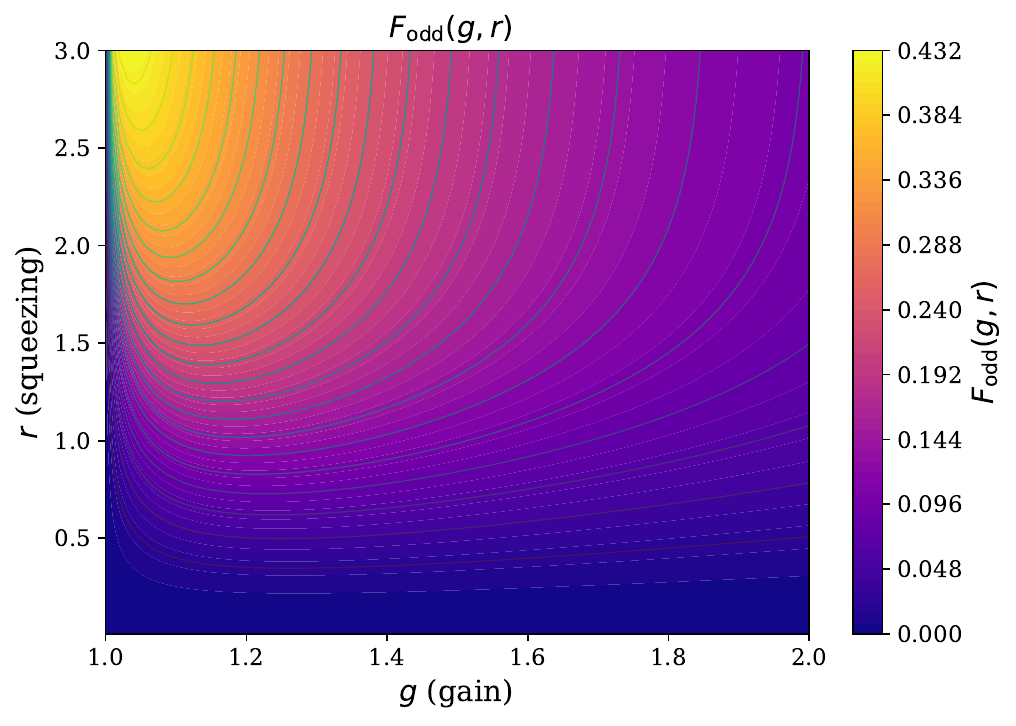}%
  \caption{\justifying{
    Odd QFI fraction
    $F_{\mathrm{odd}}(g,r)
      = QFI_o(g,r)/[QFI_e(g,r)+QFI_o(g,r)]$
    for sensing the gain $g$ of the amplification channel when the probe is a squeezed state. The horizontal axis shows the gain $g$; the odd contribution dominates in regions where the purity changes slowly with $g$, while the even sector takes over when the state approaches purity-sensitive boundaries.
  }}
  \label{fig:amp-Fodd}
\end{figure}

\begin{figure}[h]
  \centering
  \includegraphics[scale=0.5]{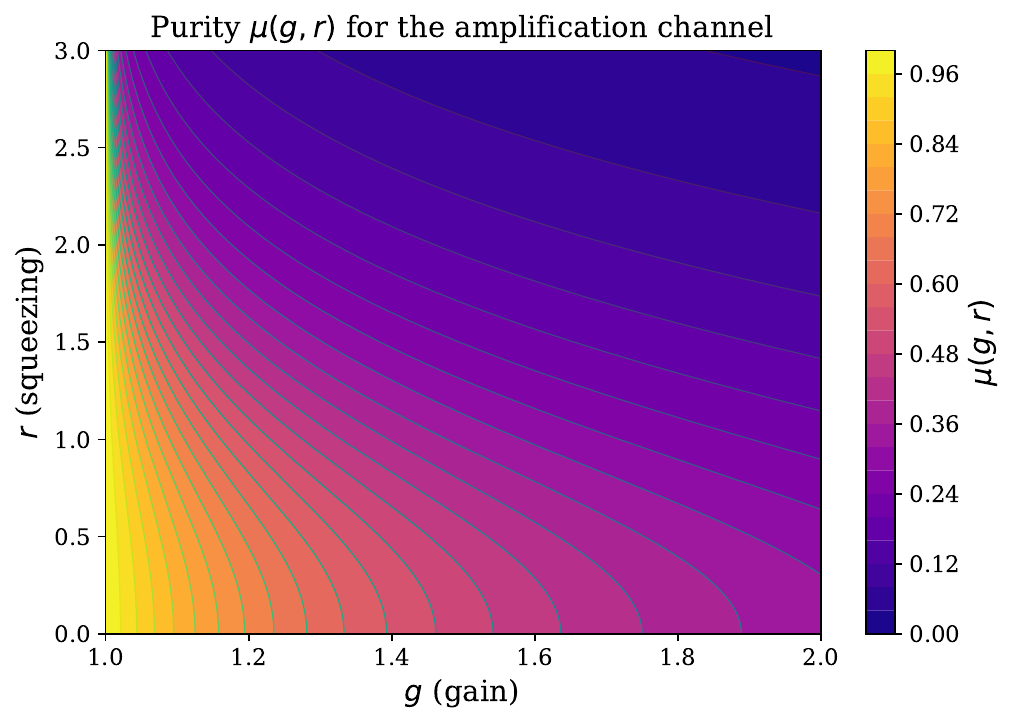}%
  \caption{\justifying{
    Purity of the output state for the amplification channel,
    $\mu(g,r) = 1/[2\sqrt{\det V(g,r)}]$, plotted as a function of the
    gain $g$ (horizontal axis) for the same squeezed input as in
    Fig.~\ref{fig:amp-Fodd}.
    Regions where the purity varies slowly with $g$ correlate with a
    dominant odd QFI fraction in Fig.~\ref{fig:amp-Fodd}, whereas
    regions of rapidly changing purity are associated with a dominant
    even contribution to the QFI.
  }}
  \label{fig:amp-purity}
\end{figure}

\subsection{Quantum Phase estimation and QFIM}

For this, we consider the estimation of the $\eta$ and $\theta$ for the state:
\begin{equation}
  \rho(\eta,r,\theta)= \mathcal{R}_\theta\mathcal L_\eta( \hat S_1(r)(\ket{0}\bra{0})\hat S_1(r)^\dagger)\mathcal R_\theta^\dagger,
\end{equation}
where $\mathcal R_\theta$ is the phase shift unitary whose symplectic is defined as $R_\theta=e^{\theta G_P}$ with
\be
G_P=\begin{pmatrix} 0 & 1\\
-1& 0\end{pmatrix}.
\ee
It is easy to see that the frame change will be carried out by the symplectic $S_F=S^{-1}_{\eta,r}e^{-\theta G_P}$. This gives $\dot\Sigma_{\eta}=S_F(\dot V_\eta)S_F^T$ and $\dot\Sigma_{\theta}=S_F(\dot V_\theta)S_F^T$. Here, $\dot V_\eta=\partial_\eta V$ and $\dot V_\theta=\partial_\theta V$ where $V$ is the entire covariance matrix. The covariance matrix after the loss channel and before the phase is diagonal and given by Eq. \eqref{loss1}, and the phase rotation acts by congruence:
\begin{equation}
  V
  = R_\theta\,V(\eta,r)\,R_\theta^T.
\end{equation}

From this we can extract the $M,N,A,B$ matrices (they are just numbers in this case) as:
\be
\begin{split}
    &M^\eta= m, A^\eta=a, B^\eta=N^\eta=0,\\
    &B^\theta=\Lambda_2-\Lambda_1=-\eta \sinh(2r), A^\theta= M^\theta=N^\theta=0.
\end{split}
\ee
Here, $m,a$ are already defined in Eq. \eqref{loss3}. This shows that the off-diagonal terms of QFI vanish using our formula. It means that the two parameters are orthogonal. Overall, this gives two parts of QFIM:
\be
\begin{split}
    QFIM_o&=\begin{pmatrix}
     \frac{4a^2}{4{\Lambda_1\Lambda_2}+1}& 0,\\
    0&\frac{4\eta^2\sinh^2(2r)}{4{\Lambda_1\Lambda_2}+1}
    \end{pmatrix}\\
    QFIM_e&=\begin{pmatrix}
    \frac{4m^2}{4{\Lambda_1\Lambda_2}-1}& 0\\
    0&0
\end{pmatrix}.
\end{split}
\ee
Basically, we used corollary \ref{loss2} for this specialized case.
To visualize the trade-off between these QFIM sectors of even and odd, we can define vectors $v_\lambda=(\sqrt{QFI_e^\lambda},\sqrt{QFI_o^\lambda})$ where $\lambda$ is the estimation parameter. For the above example we have $v_\eta=(\frac{4m^2}{4{\Lambda_1\Lambda_2}-1}, \frac{4a^2}{4{\Lambda_1\Lambda_2}+1} )$ and $v_\theta=(0,\frac{4\eta^2\sinh^2(2r)}{4{\Lambda_1\Lambda_2}+1})$. We plot these vectors in a plane as a function of $\eta$. Because of the symmetry of QFI for sensing of $\eta$ around $\eta=0.5$ we modify $v_\eta=(\frac{sgn(\eta-0.5)4m^2}{4{\Lambda_1\Lambda_2}-1}, \frac{4a^2}{4{\Lambda_1\Lambda_2}+1} )$. Here, $sgn(x)$ is the sign function that takes the value $+1$ for $x> 0 $, $-1$ for $x<0$ and $0$ at $x=0$. These vectors basically separate the even and the odd components of QFI and provide a visual interpretation, that as we vary $\eta $, our sensing uses resources related to spectral changes as well as those that squeeze or change the shape of our state. This can be seen from the fact that we trace a curve in the entire plane and not just along a constraint axis. On the other hand, the QFI for sensing the $\theta$ parameter is only constrained in the y-axis of the plot because it gives only an odd contribution.

\begin{figure}[h]
\begin{centering}
\includegraphics[scale=0.4]{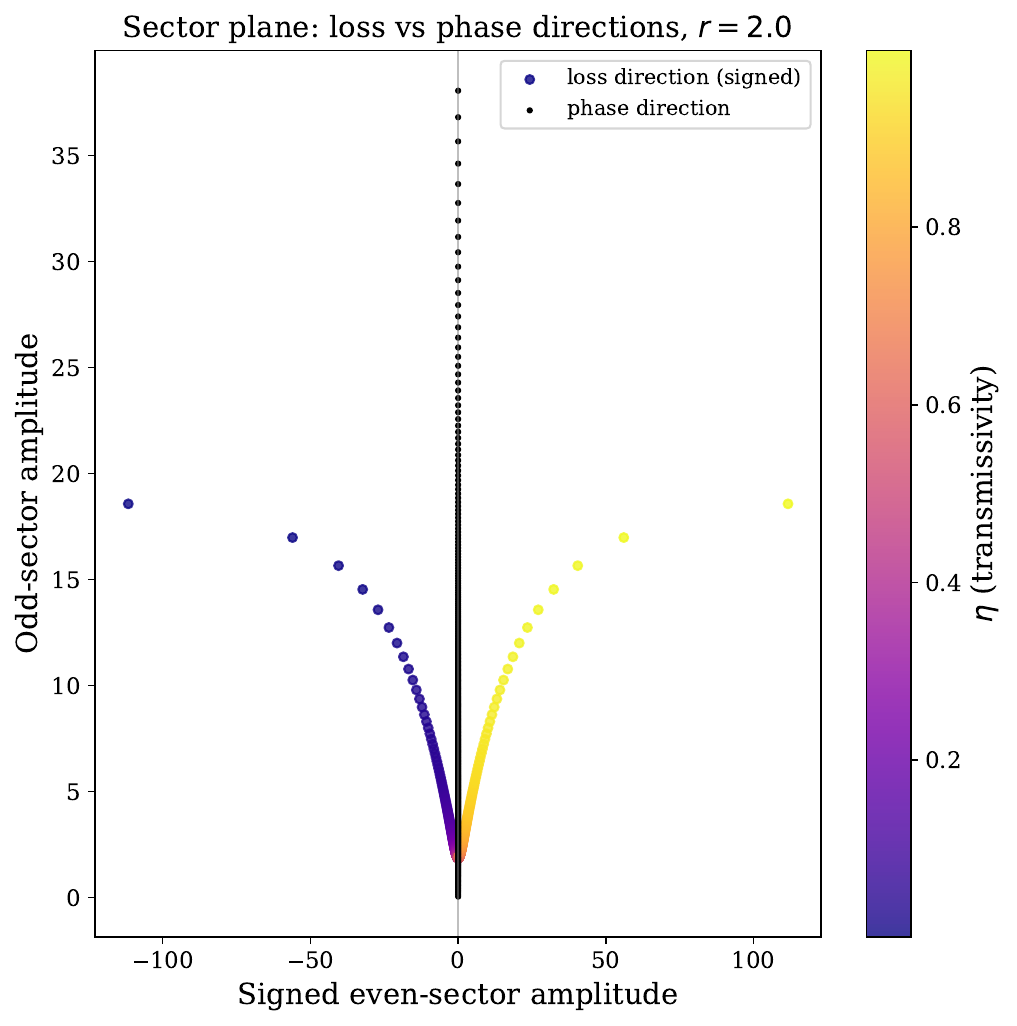}
\caption{\label{fig:epsart} Plot of sector vectors for even and odd in the case of phase estimation.}
\end{centering}
\end{figure}

Even before splitting into even and odd sectors, the diagonal QFIM tells
us that, locally, $\eta$ and $\theta$ can be estimated jointly without
fundamental incompatibility. The even-odd split adds a finer geometric and physical
decomposition of \emph{how} each parameter is encoded in the Gaussian
state.

\paragraph*{(i) Loss parameter $\eta$-purity plus shape:}
The even contribution $(QFIM_e)_{\eta\eta}$ measures
how sensitively the \emph{purity} of the mode responds to changes in $\eta$, whereas the odd contribution $(QFIM_o)_{\eta\eta}$ measures how sensitively the \emph{shape} of the noise ellipse (relative
squeezing between $q$ and $p$) responds.

Near the pure boundaries $\eta\to 0$ and $\eta\to 1$, the symplectic eigenvalue approaches $k=\tfrac12$, and the state becomes pure. Along such
purity-breaking directions $(QFIM_e)_{\eta\eta}\to\infty$, whereas $(QFIM_o)_{\eta\eta}$
remains finite. In this regime, the metrological power for estimating $\eta$ is overwhelmingly purity-based (even sector).

At intermediate losses (e.g.\ $\eta\approx \tfrac12$), the state is strongly mixed and its purity changes slowly with $\eta$, so the even part becomes less important and the odd part (the sensitivity of the squeezing anisotropy to $\eta$) can dominate.

\paragraph*{(ii) Phase parameter $\theta$- purely odd, shape-only:}

The $\theta$ parameter does not change the symplectic eigenvalue $k$ or the purity; it only rotates the noise ellipse in phase space, creating off-diagonal correlations between $q$ and $p$. Accordingly, $({QFIM}_e)_{\theta\theta} = 0$ while
$({QFIM}_o)_{\theta\theta} > 0$.

More explicitly,
\begin{equation}
  ({QFIM}_o)_{\theta\theta}
  = \frac{4\eta^2 \sinh^2(2r)}{4\Lambda_1\Lambda_2 + 1}.
\end{equation}
This is the familiar scaling of phase sensitivity with squeezing: for fixed $\eta$, the QFI in $\theta$ grows initially as $\sinh^2(2r)$ and is penalised by the mixedness factor $4\Lambda_1\Lambda_2+1$ in the denominator.

Thus, the phase parameter is a \emph{purely odd-sector} parameter: all its metrological power derives from shape/correlation changes, with no contribution from purity. This is consistent with the fact that phase
shifts are unitary and preserve the Gaussian symplectic spectrum.

Overall, the even--odd splitting elucidates the joint phase--loss estimation problem in two complementary ways: it (i) identifies $\theta$ as a purely shape/correlation parameter and $\eta$ as a mixed purity+shape parameter, and (ii) explains the orthogonality of the parameters and the structure of the QFIM in terms of the geometry of parity-graded tangent directions on the Gaussian state manifold.

\section{\label{sec:level6}
Conclusion and Outlook \protect}

In this work, we have introduced a geometric and operational decomposition of the QFI for centered multimode Gaussian states into two additive, orthogonal contributions, which we call the even and odd sectors. The construction is based on a Cartan decomposition of the symplectic Lie algebra into generators that are even or odd with respect to the symplectic form, and it is implemented directly at the level of tangent vectors to the Gaussian state manifold. Any infinitesimal change of a covariance matrix can thus be written as the sum of an \textit{even component}, associated with deformations of the symplectic spectrum, and an \textit{odd component}, associated with correlation-generating Gaussian dynamics at fixed spectrum. Because the Bures metric and the corresponding QFI respect this splitting, the total QFI assumes the form $QFI_e+QFI_o$ with each term non-negative and endowed with a distinct geometric meaning and metrological interpretation.

On the manifold of pure Gaussian states, this structure simplifies in a striking way. There, all symplectic eigenvalues are fixed at the minimum allowed by the uncertainty principle, so that even directions are frozen out and the even sector of the Bures metric vanishes identically. The QFI is then entirely odd and directly connected to the natural metric on the Siegel upper half-space that parametrizes pure Gaussian states via their complex adjacency (graph) matrices. In this regime, we derived an explicit graphical expression for the QFI, which shows that \textit{metrological sensitivity} is fully determined by how the parameter drives the underlying Gaussian graph along the \textit{Siegel manifold}. This provides a geometric foundation for continuous-variable metrology with pure Gaussian probes and clarifies which Gaussian resources are actually metrologically relevant.

For mixed Gaussian states, the geometry naturally acquires a fiber-bundle structure. The base manifold consists of admissible symplectic eigenvalues (the spectrum), while the fibers encode the choice of symplectic frame and thus the pattern of correlations. The even sector of the Bures metric depends only on motion along the base, that is, on changes in the \textit{symplectic eigenvalues}, and we obtained closed expressions for this contribution in the \textit{Williamson eigenbasis}. The odd sector is complementary: it is insensitive to purely spectral variations and quantifies precisely those deformations that change the symplectic frame at fixed spectrum. In this way, the odd QFI measures the metrological usefulness of correlation-building Gaussian operations—such as single- and two-mode squeezing and nontrivial mode mixing, while orthogonal symplectic transformations play a controlled and purely odd role. This picture yields a clean separation between spectral and correlation-based resources, which is particularly useful for interpreting and comparing realistic sensing protocols.

We further derived a state-dependent lower bound on the even QFI in terms of the rate of change of the global purity. This result shows that the even sector imposes a \textit{speed limit} on how fast a parameter can change the mixedness of a Gaussian state: given a fixed amount of even QFI, the purity cannot vary arbitrarily quickly along the parameter manifold. Conversely, monitoring how purity changes with the parameter provides an experimentally accessible way to obtain a lower bound on the even contribution to the QFI. The divergence of this bound when approaching the pure manifold reflects the increasing \textit{stiffness} of the metric along even directions and explains why purely spectral sensitivity is suppressed in the pure-state limit.

To illustrate the operational content of the even–odd decomposition, we analyzed several representative metrological scenarios. For unitary encodings such as \textit{ beam-splitter} and \textit{two-mode-squeezing} interactions acting on thermal inputs, the even sector isolates the sensitivity that arises from population and temperature differences, while the odd sector quantifies genuinely correlation-based enhancements. In the context of thermal-contrast interferometry, the even QFI captures how sensitivity depends on temperature imbalance and vanishes when the thermal inputs are identical, whereas the odd QFI becomes dominant when active squeezing generates inter-mode correlations. For Gaussian channels such as loss and phase-insensitive amplification, the decomposition separates spectral changes (e.g., attenuation or amplification of symplectic eigenvalues) from the effect of pre- and post-processing Gaussian unitaries. This allows one to identify regimes where improved performance must ultimately come from correlation-generating dynamics rather than from population engineering alone.

We also extended the analysis to multi-parameter quantum metrology by constructing the full QFI matrix and studying its block structure. Parameters that affect only the spectrum (for example, loss or temperature) and parameters that act purely through correlation-generating unitaries (such as phase shifts in an appropriate frame) naturally define even and odd directions, respectively. In joint phase–loss estimation, this leads to QFI matrices that are close to block diagonal in suitable coordinates, revealing when the two parameters can be estimated compatibly and when trade-offs are unavoidable. From a practical viewpoint, the even–odd decomposition thus provides a diagnostic tool for protocol design: it indicates when resources should be invested in cooling, noise and channel engineering (to enhance the even contribution), and when genuine advantages require more sophisticated multimode Gaussian operations that enrich the odd sector.

The present framework opens several promising directions for future research. A natural extension is the extension to non-Gaussian states and operations, where one may ask whether analogous symmetry-induced decompositions of the Bures metric persist beyond the Gaussian setting. From a practical perspective, the even--odd split suggests concrete design principles for continuous-variable sensors: protocols that aim to suppress noise sensitivity should minimize even-sector contributions, while correlation-enhancing strategies should be optimized to maximize the odd sector under experimental constraints.

Finally, we believe this geometric viewpoint to be useful beyond quantum metrology, for instance in benchmarking Gaussian channels, analyzing dissipative phase transitions in bosonic systems, and studying resource theories where purity and correlations play distinct operational roles. By making explicit the geometric and algebraic structure underlying Gaussian quantum Fisher information, our results provide a versatile framework for understanding and exploiting quantum-enhanced sensitivity in continuous-variable platforms.

\color{black}
\begin{acknowledgments}
K.C. and U.L.A. gratefully acknowledge support from
the Danish National Research Foundation, Center for Macroscopic Quantum States (bigQ,\ DNRF0142), 
EU project CLUSTEC (grant agreement no. 101080173), and EU ERC project ClusterQ (grant agreement no. 101055224, ERC-2021-ADG). TP acknowledges research funding from the QVLS-Q1 consortium, supported by the Volkswagen Foundation and the Ministry for Science and Culture of Lower Saxony.
P.C. acknowledges the support from the International Postdoctoral Fellowship from the Ben May
Center for Theory and Computation. V. S.  is supported by a KIAS Individual Grant No. PG096801   at Korea Institute for Advanced Study. 
\end{acknowledgments}

\appendix
\section{Proofs for Sec. III}\label{app1}
Most of the theorems and claims that are made in the main text are concentrated inside section \ref{sec:level4}. The physical interpretation of the claims and the intuition behind the proofs are already described in detail in the main text. Here, we provide the main steps behind the proofs for the interested readers:\\
We start with the proof of theorem \ref{thanos1} which relates QFI and Graphical calculus together. The main idea of the proof is to do a direct evaluation of the Bures metric in terms of graphical parameters:
\begin{proof}
    We proceed with the direct evaluation of $8ds^2_{pure}=\tr[(V^{-1}dV)^2]$ with $V=S_ZS_Z^T=\begin{pmatrix} Y^{-1} & Y^{-1}X\\
XY^{-1}& Y+XY^{-1}X\end{pmatrix}$. Then $V^{-1}=\begin{pmatrix} Y+XY^{-1}X & -XY^{-1}\\
-Y^{-1}X& Y^{-1}\end{pmatrix}$. Now we first get the $dV$ part:
\[
\begin{aligned}
d(Y^{-1}) &= -\,Y^{-1}(dY)\,Y^{-1},\\
d(Y^{-1}X) &= -\,Y^{-1}(dY)\,Y^{-1}X + Y^{-1}(dX),\\
d(XY^{-1}) &= (dX)\,Y^{-1} - X\,Y^{-1}(dY)\,Y^{-1},\\
d\!\bigl(Y+XY^{-1}X\bigr)
 &= dY + (dX)\,Y^{-1}X \\
 &- X\,Y^{-1}(dY)\,Y^{-1}X + X\,Y^{-1}(dX).
\end{aligned}
\]
Then $V^{-1}dV=\begin{pmatrix} A & B\\
C& D\end{pmatrix}$ where 
\[
\begin{aligned}
A &= -dY\,Y^{-1}-XY^{-1}(dX)Y^{-1}, \\[4pt]
B &= dX - dY\,Y^{-1}X - XY^{-1}dY - XY^{-1}(dX)Y^{-1}X, \\[4pt]
C &= Y^{-1}(dX)Y^{-1}, \\[4pt]
D &= -A^T.
\end{aligned}
\]
Now $\tr[(V^{-1}dV)^2]/2=\tr[A^2+BC]=\tr[(Y^{-1}dX)^2]+\tr[(Y^{-1}dY)^2]=\tr[Y^{-1}dZY^{-1}dZ^*]=ds^2_{Siegel}$. In simplifying such equations we have heavily used cyclicity of trace and invariance of trace under transposition. We have also used the fact that $X,Y,dX,dY$ are all symmetric. 
\end{proof}
Next we discuss the proof behind splitting of the Bures metric in the representation of $(Z,\Gamma)$ as discussed in the main text. The main idea behind this is leveraging the Cartan decomposition of Lie-algebra of symplectic group along with the parity property of commutation or anti-commutation with respect to the symplectic form $\Omega$ . Proof of theorem \ref{x3} proceeds as:
\begin{proof}
    Given a state $(Z,\Gamma)$, we have the covariance matrix as $V=S_Z\Gamma S^T_Z$, and we evaluate the Bures metric in a unitarily rotated frame $S_Z$ frame, where
    \begin{equation}
        ds^2=2\tr[S^{-1}_ZdV S_Z^{-T}(4\mathcal{L}_{\Lambda}+\mathcal L_\Omega)^{-1}(S^{-1}_ZdV S^{-T}_Z)]
    \end{equation}
    From standard algebra it is easy to see that $d\bar V:= S^{-1}_ZdV S^{-T}_Z=w\Gamma+\Gamma w^T+d\Gamma=[a,\Gamma]+\{s,\Gamma\}+d\Gamma$ where we identify $dv_o=\{s,\Gamma\}; dv_e=d\Gamma+[a,\Gamma]$. Using the fact that $a^T=-a,  s^T=s$ and both are elements of $\mathfrak{sp}_{2n}$ we get
    \be
    \begin{split}
        \Omega dv_e&=dv_e\Omega\\
        \Omega dv_o&=-dv_o\Omega
    \end{split}
    \ee
This commutation property is what we leverage. It is important to note that this property holds only because of the way we split our state information as $(Z,\Gamma)$ such that $[\Gamma,\Omega]=0$. We define $P_\Omega(.)=\Omega(.)\Omega^T$ and then call any operator $H$ even if $P_\Omega(H)=H$ and odd in $P_\Omega(H)=-H$. Then it is easy to see that the operator $\mathcal M_\Gamma\circ P_\Omega=P_\Omega\circ \mathcal M_\Gamma$ and using properties of pseudo-inverse one shows that $\mathcal M_\Gamma^{-1}$ also commutes with $P_\Omega$. Overall this means that cross terms of the form $\tr[dv_e\mathcal M_\Gamma^{-1}(dv_o)]=0$:
\be
\begin{split}
   \tr[dv_e\mathcal M_\Gamma^{-1}(dv_o)]&=-\tr[dv_e\mathcal M_\Gamma^{-1}\circ P_\Omega(dv_o)]\\
   &=-\tr[dv_eP_\Omega\circ\mathcal M_\Gamma^{-1}(dv_o)]\\
   &= -\tr[dv_e\mathcal M_\Gamma^{-1}(dv_o)]
\end{split}
\ee
and this simplification leads to the expression as required. 
\end{proof}
Next we discuss the proof of proposition \ref{prop:frame-invariance} which states that the splitting is invariant and hence, canonically defined over set of frames connected via orthogonal symplectic matrices. The main idea behind the proof is to leverage the property that any element $O\in O_{sp}$ commutes with the symplectic form $\Omega$. 
\begin{proof}
By definition,
\[
  \dot\Sigma_1 = S_1^{-1}\dot V S_1^{-T},\qquad
  \dot\Sigma_2 = S_2^{-1}\dot V S_2^{-T}.
\]
Using $S_2 = S_1 O$ and the fact that $O(t)$ is not differentiated in
$\dot V$, we have
\[
  \dot\Sigma_2
  = O(t)^T\,\dot\Sigma_1\,O(t)
  =: P_{O(t)}(\dot\Sigma_1),
\]
where $P_O(X):= O^T X O$ is an orthogonal linear map on the space of
symmetric matrices. The even/odd projectors are
\[
  P_\pm(X) := \tfrac12\bigl(X \pm P_\Omega(X)\bigr),\qquad
  P_\Omega(X) := \Omega X \Omega^T.
\]
Since $O(t)\in O_{sp}$ commutes with $\Omega$, we have
$P_\Omega\circ P_{O} = P_{O}\circ P_\Omega$, and hence
$P_\pm\circ P_{O} = P_{O}\circ P_\pm$. Thus
\[
  P_\pm(\dot\Sigma_2)
  = P_\pm\bigl(P_O(\dot\Sigma_1)\bigr)
  = P_O\bigl(P_\pm(\dot\Sigma_1)\bigr).
\]

The Bures superoperator for $K$ transforms under the conjugation
$P_O$ as
\[
  M_{V_2} = P_O \circ M_{V_1} \circ P_O^{-1},
\]
so, by spectral calculus for self-adjoint operators (or using transformation properties of pseudo-inverse under singular value decomposition) we get,
\[
  M_{V_2}^{-1} = P_O \circ M_{V_1}^{-1} \circ P_O^{-1}.
\]
The even QFI in frame~2 is therefore
\begin{align}
  QFI_e^{(2)}
  &= 2\,\mathrm{Tr}\bigl[
        P_+(\dot\Sigma_2)\,
        M_{V_2}^{-1}\bigl(P_+(\dot\Sigma_2)\bigr)
      \bigr] \\
  &= 2\,\mathrm{Tr}\bigl[
        P_O P_+(\dot\Sigma_1)\,
        P_O M_{V_1}^{-1} P_O^{-1}\,
        P_O P_+(\dot\Sigma_1)
      \bigr] \\
  &= 2\,\mathrm{Tr}\bigl[
        P_+(\dot\Sigma_1)\,
        M_{V_1}^{-1}\bigl(P_+(\dot\Sigma_1)\bigr)
      \bigr] \\
  &= QFI_e^{(1)},
\end{align}
where we used the orthogonality of $P_O$ with respect to the Hilbert--Schmidt
inner product, i.e.\ $\mathrm{Tr}[P_O(X) P_O(Y)] = \mathrm{Tr}[XY]$.
The same argument applies to the odd projector $P_-$, yielding
$\mathrm{QFI}_o^{(2)} = \mathrm{QFI}_o^{(1)}$.
\end{proof}
Now we discuss how to compute the QFI splitting in Willamson's frame which is the main content of theorem \ref{will decom}:
\begin{proof}
    For the proof, we use the fact that $\dot\Sigma=X_++X_-$ has a decomposition as in equation (\ref{thanos2}), where we name the even term as $X_+$ and the odd one as $X_-$. We also use $V=diag(K,K)$ with $K=diag(k_1,\ldots,k_n)$. Now we compute the action of $\mathcal M_V$ on $\dot \Sigma$. We only show the even part because the odd part is similar. 
    \begin{itemize}
        \item $\mathcal{L}_\Omega(X_+)=-X_+$
        \item $\mathcal{L}_V(X_+)=\begin{pmatrix} KMK & -KNK\\
KNK& KMK\end{pmatrix}$
\item Overall, this means 
\begin{equation}
\begin{split}
    \mathcal{M}_V(X_+)&=\begin{pmatrix} \alpha^-\circ M & -\alpha^-\circ N\\
\alpha^-\circ N& \alpha^-\circ M\end{pmatrix}\\
\mathcal{M}_V(X_-)&=\begin{pmatrix} \alpha^+\circ A & \alpha^+\circ B\\
\alpha^+\circ B& -\alpha^+\circ A\end{pmatrix}
\end{split}
\end{equation}
\item Now for the pseudo-inverse, it is sufficient to notice that $\mathcal{M}_V(\mathcal{B}_{ij})=(4k_ik_j-1)\mathcal{B}_{ij}$ where $\mathcal B_{ij}$ can be a basis element for the space of symmetric matrices or anti-symmetric matrices within each block. This means that on such spaces the matrix $\mathcal M_V$ is diagonal and from this we can get its inverse. More precisely, $\mathcal B^{\pm}_{ij}=\ket{i}\bra{j}\pm \ket{j}\bra{i}$ where we can use the plus for spanning the symmetric matrices and the minus for spanning the anti-symmetric ones. 
\end{itemize}
Here, $M\circ N$ is the element-wise Hadamard product between matrices $M,N$. This gives the pseudo-inverse terms as:
\begin{equation}
    \begin{split}
    \mathcal{M}^{-1}_V(X_+)&=\begin{pmatrix} \alpha^-\oslash M & -\alpha^-\oslash N\\
\alpha^-\oslash N& \alpha^-\oslash M\end{pmatrix}\\
\mathcal{M}^{-1}_V(X_-)&=\begin{pmatrix} \alpha^+\oslash A & \alpha^+\oslash B\\
\alpha^+\oslash B& -\alpha^+\oslash A\end{pmatrix}
\end{split}
\end{equation}

where $(M\oslash N)_{ij}=N_{ij}/M_{ij}$ is elementwise Hadamard division. Because of the pseudo-inverse, we are safe for cases where the denominator becomes 0. Now computing $2\tr[X_+\mathcal M^{-1}_V(X_+)]$ and $2\tr[X_-\mathcal M^{-1}_V(X_-)]$ gives the above equation (11).
\end{proof}
Below we sketch the proof of theorem \ref{thanos3} which gives a state-dependent lower bound on
the even QFI in terms of the rate of change of the global
purity. This result shows that the even sector imposes
a speed limit on how fast a parameter can change the
mixedness of a Gaussian state: given a fixed amount
of even QFI, the purity cannot vary arbitrarily quickly
along the parameter manifold.:
\begin{proof}
Here, the main idea is that from equation \ref{will decom} we have $QFI_e(t)$ as a sum of weighted Frobenius norms of matrix $M$ and $N$. Now we can extract the squares of the diagonal terms of $M$, which gives us a lower bound as: 
\begin{align}
        \frac{1}{4}QFI_e(t)&\geq \sum_i\frac{M_{ii}^2}{a_{ii}^-}\\
        & \geq\sum_i\frac{\dot k^2_i}{4k^2_i-1}\geq 4\frac{<\vec v,\vec a>^2}{||a||^2}
\end{align}
    
where $v_i=\frac{\dot k_i}{\sqrt{4k^2_i-1}}$ and $a_i=\frac{\sqrt{4k^2_i-1}}{k_i}$ and we used Cauchy-Schwarz inequality. Now observe that $\dot{ln(\mu)}=-<\vec v,\vec a>$ and $||V^{-1}||^2=2\sum_i\frac{1}{k_i^2}$. This ends the proof.
\end{proof}
\bibliography{apssamp}
\end{document}